%% file: main.tex
\documentclass[journal]{IEEEtran}

\ifCLASSINFOpdf
  \usepackage{graphicx}
\else
  \usepackage[dvips]{graphicx}
\fi

\usepackage{amsmath,amssymb,amsthm,mathtools}
\usepackage{mathrsfs}
\usepackage{bm}
\usepackage{dsfont}
\usepackage{extarrows}
\usepackage{textcomp,latexsym}
\usepackage{array}
\usepackage{multirow}
\usepackage{bigstrut}
\usepackage{booktabs}
\usepackage{makecell}
\usepackage{colortbl}
\usepackage{multicol}
\usepackage{algorithm}
\usepackage[noend]{algpseudocode}

\usepackage{subfigure}
\usepackage{float}
\usepackage{stfloats}
\usepackage[dvipsnames]{xcolor}
\usepackage{optidef}
\usepackage{url}
\usepackage[noadjust]{cite}
\usepackage{cleveref}
\usepackage{pifont}
\usepackage{epstopdf}
\usepackage{tikz}
\usetikzlibrary{arrows.meta,positioning,calc,shapes.geometric}
\usepackage{setspace}
\usepackage{flushend}

\definecolor{navy}{RGB}{47,85,151}
\definecolor{graybox}{RGB}{248,248,248}

\newtheorem{theorem}{Theorem}
\newtheorem{proposition}{Proposition}
\newtheorem{corollary}{Corollary}
\newtheorem{lemma}{Lemma}
\newtheorem{definition}{Definition}
\newtheorem{remark}{Remark}
\newtheorem{assumption}{Assumption}

\DeclareMathOperator{\diag}{diag}
\DeclareMathOperator{\tr}{tr}

\newcommand{\E}{\mathbb{E}}
\newcommand{\C}{\mathbb{C}}
\newcommand{\R}{\mathbb{R}}
\newcommand{\CN}{\mathcal{CN}}
\newcommand{\N}{\mathcal{N}}
\newcommand{\cQ}{\mathcal Q}
\newcommand{\cX}{\mathcal X}
\newcommand{\cZ}{\mathcal Z}
\newcommand{\cS}{\mathcal S}

\newcommand{\cW}{\mathcal W}
\newcommand{\cI}{\mathcal I}

\newcommand{\cD}{\mathcal D}
\newcommand{\cT}{\mathcal T}
\newcommand{\cV}{\mathcal V}

\newcommand{\EBD}{\mathrm{EBD}}
\newcommand{\LDEBD}{\mathrm{LD\text{-}EBD}}
\newcommand{\Prob}{\mathbb{P}}

\IEEEoverridecommandlockouts
\allowdisplaybreaks[2]

\begin{document}

\title{Event-Native Immersive Communication for 6G: From Multimodal Streams to Event Beliefs}
\author{Yiyang Zhu, Li Wei,~\IEEEmembership{Member,~IEEE} and Chau Yuen,~\IEEEmembership{Fellow,~IEEE}
\thanks{Y. Zhu and C. Yuen are with the School of Electrical and Electronics Engineering, Nanyang Technological University, Singapore 639798 (e-mail: yiyang015@e.ntu.edu.sg, chau.yuen@ntu.edu.sg).}
\thanks{L. Wei is with College of Information Science and Electronic Engineering, 
Zhejiang University, Hangzhou 310027, China (e-mail: weili\_xd@163.com).}
}

\maketitle

\begin{abstract}
The development of multimodal sensing capabilities is crucial for realizing immersive communications and enabling rich, natural remote interactions in future sixth-generation~(6G) communications. However, transmitting high-dimensional sensory streams places substantial pressure on wireless resources and leaves a fundamental question unresolved: what information about the underlying physical process should be preserved across wireless delivery? To address this issue, we establish an event-native immersive-communication framework in which a service-relevant physical event defines the communication object, while event beliefs represent the information available to transceiver agents before and after wireless delivery. Based on this event-native framework, we further propose event-belief deficiency (EBD) to evaluate quality of experience (QoE) in immersive communications, which differs from the conventional normalized mean-squared error (NMSE), as NMSE can be insensitive to physically distinct procedures whose stream distortions lie in a similar range. For Gaussian event beliefs with mode-separable delivery over Rayleigh fading, we derive the log-det EBD, its induced received signal-to-noise ratio (SNR) threshold, the exact outage probability, and the required transmit SNR. Simulation results with a physics-driven source show that stream NMSE can remain nearly unchanged across physically distinct event procedures, whereas log-det EBD reliably reflects their event-level uncertainty. Relative to a mean-squared-error (MSE)-oriented interface, event-native delivery reduces normalized log-det EBD by up to $63.6\%$ and lowers the required average received SNR by $3.47$~dB under the evaluated outage constraint, substantiating the effectiveness of the proposed framework.
\end{abstract}

\begin{IEEEkeywords}
Immersive communication, event-native communication, multimodal sensing, event-belief deficiency, agent communication.
\end{IEEEkeywords}

\input{sections/introduction}
\input{sections/system_model}
\input{sections/metric}
\input{sections/analysis}
\input{sections/simulation}
\input{sections/conclusion}

\input{sections/appendix}

\bibliographystyle{IEEEtran}
\bibliography{IEEEabrv,references}

\end{document}

%% file: sections/introduction.tex
\section{Introduction}
\label{sec:introduction}
Immersive communication is envisioned as a key service paradigm for future sixth-generation (6G) wireless networks, with the ambition of enabling lifelike presence and real-time engagement across remote, virtual, and reconstructed environments~\cite{shen2023toward,sehad2025generative}. Different from traditional media services that are primarily designed for human-to-human content delivery, this paradigm extends the communicating entities to a broader class of intelligent agents, including humans, artificial intelligence (AI) systems, robots, and their hybrid forms. This broadened vision opens the door to a wide spectrum of applications, ranging from extended reality, holographic telepresence, remote education, and healthcare to industrial digital twins, robotic teleoperation, human--machine collaboration, and Internet-of-Senses services~\cite{vanDerHooft2023tutorial,zhao2024metaverse,dafarra2024icub3,darvish2023teleoperation,tao2024advancements,mihai2022digital}. 
Nevertheless, translating this vision into practical immersive services still requires addressing several fundamental issues.

The fundamental challenge is to determine what information should be preserved when high-dimensional multimodal observations are delivered under stringent wireless constraints. In immersive scenarios, visual, auditory, haptic, and other sensory signals are not isolated media streams, but heterogeneous observations of the same evolving physical process. A stream-centric design that directly transports these sensory signals may retain sample-level details, yet it imposes substantial bandwidth pressure, together with stringent latency and synchronization burdens. However, alleviating this pressure by reducing the transmitted content inevitably introduces a fidelity tradeoff, making the problem not merely a matter of compression, but a representation question. This issue is further complicated by cross-modal coupling: small signal-level errors may alter the perceived temporal evolution or physical consistency of an event, whereas large sample-wise deviations may be irrelevant if they do not affect the receiver's event-level understanding. Thus, the key issues in immersive communication become how to represent the underlying event and how to evaluate whether the delivered information preserves immersive fidelity.

\subsection{Related Works and Limitations}

Toward the realization of immersive communication, substantial efforts have been made across adjacent areas, each addressing a different system bottleneck. 

\emph{High-dimensional media delivery:} The most mature technical ingredients for the immersive-communication vision come from the delivery of high-dimensional media. Extended reality (XR), volumetric video, point-cloud, and holographic systems rely on rich visual or spatial representations whose transmission and rendering are constrained by bandwidth, latency, and synchronization requirements~\cite{zhu2025holocom,wang2024hermes,kurka2021bandwidth}. The corresponding literature improves this pipeline through compression, adaptive streaming, edge rendering, and synchronization, with quality evaluated by reconstruction fidelity, perceptual or rendering quality, and viewing experience. These criteria have advanced visual immersion, but they keep the communication target tied to the selected media object. If audio and haptic signals are added as time-aligned streams, the evaluation still scores the delivered streams or rendered views, rather than whether the modalities jointly support a physically consistent event.

\emph{Low-latency haptic communication:} Another indispensable ingredient comes from haptic communication and the Tactile Internet. In this setting, the main difficulty is the effect of delay, packet loss, and jitter on closed-loop physical interaction, rather than the scale of visual data~\cite{antonakoglou2018toward,promwongsa2021comprehensive,steinbach2012haptic}. Delayed or unreliable feedback can change the dynamics of a bilateral system and even destabilize the interaction loop. Existing designs therefore employ coding, prediction, and control-aware transmission for position, force, and vibration signals. The resulting notion of quality is operational, with latency, jitter, reliability, and transparency. These metrics are essential for touch-enabled interaction, but they are defined around the haptic channel itself and do not determine whether the delivered touch agrees with simultaneous visual and auditory evidence of the same event.

\emph{Task-oriented transmission:} Broader communication-theoretic ingredients are provided by semantic and task-oriented transmission. Deep joint source--channel coding (JSCC) learns source and channel mappings for a chosen feature~\cite{gunduz2025jscc,bourtsoulatze2019deep,wu2024deepjsccMIMO}, while semantic and task-oriented communication optimizes semantic similarity or downstream utility~\cite{xie2021semantic,lyu2024semantic,dai2022nonlinear}. Remote-source formulations further distinguish an underlying source from its noisy observations~\cite{witsenhausen1980,oohama2014indirect}. Their quality criteria follow the declared objective, e.g., reconstruction distortion, semantic similarity, or remote-source estimation error. While this objective-driven view protects selected features or states over imperfect wireless links, the semantic target, hidden variable, and loss function are chosen case by case, making the resulting criteria difficult to generalize across time-evolving multimodal interactions.

Together, these ingredients optimize different targets: a rendered feature, a stable feedback loop, or a task-specific utility. In an immersive interaction, however, visual, auditory, haptic, and other observations are coupled through the same time-evolving event, so preserving each target separately does not necessarily preserve the event perceived or acted upon by the receiving agent. This mismatch leaves two basic questions unanswered: \emph{What should immersive communication transmit? How should its quality be measured?}

\subsection{Main Contributions}
\label{subsec:contributions}

To answer the above questions, this paper develops a physics-informed foundation for immersive communication. The central idea is to make a service-relevant physical event, rather than its sensory manifestations, the communication object and to evaluate the event information preserved at the receiving agent. This leads to \emph{event-native immersive communication}, where event beliefs define the operational information states and event-belief deficiency (EBD) defines process-level fidelity. To our knowledge, this is the first technical framework that jointly specifies a physical-event communication object and an event-level fidelity criterion for immersive communication. Main contributions are as follows.

\begin{itemize}
\item
We establish a general system model for immersive communication. A multimodal physical process acts as the source, a service-relevant event defines the communication object, and the transceiver beliefs represent the event information available before and after wireless delivery. We prove that the belief is sufficient for event-dependent inference and show that conventional source transmission and stream reconstruction arise as special cases of the proposed model.

\item
Building on this system model, we introduce the sensing-supported agent belief as the communication reference and define EBD as the communication-induced distortion of event-dependent decision capability, which leads to an event-belief reliability criterion and a corresponding communication-design problem. For Gaussian event beliefs with mode-separable delivery over Rayleigh fading, we derive the log-determinant EBD (LD-EBD), the associated received signal-to-noise ratio (SNR) threshold, the exact outage probability, and the required transmit SNR. The same formulation also exposes how event-mode resources affect the reliability threshold through the delivered precision.

\item
We validate the proposed system, metric, and reliability analysis in a physics-based environment. The results show that stream normalized mean-squared error (NMSE) can be nearly insensitive to physically distinct event procedures, whereas LD-EBD accurately reflects their event-level uncertainty. The analytical outage expression closely matches Monte Carlo results. Relative to a mean-squared-error (MSE)-oriented interface, event-native delivery reduces normalized LD-EBD by up to $63.6\%$ and lowers the required average received SNR by $3.47$~dB under the evaluated LD-EBD outage constraint.
\end{itemize}

\subsection{Organization and Notation}

The remainder of this paper is organized as follows. Section II develops the event-native system model, and Section III introduces EBD as the receiver quality of experience (QoE) metric. Section IV analyzes event-belief reliability over Rayleigh fading. Section V presents the physics-based numerical results, followed by the conclusion in Section VI.

\textit{Notation:} Lowercase and uppercase boldface letters denote vectors and matrices, respectively, while calligraphic letters denote sets or spaces. For Hermitian matrices $\mathbf A$ and $\mathbf B$, $\mathbf A\succeq\mathbf B$ means that $\mathbf A-\mathbf B$ is positive semidefinite. The symbols $\mathbb R$, $\mathbb C$, $\mathbb P(\cdot)$, $\mathbb E\{\cdot\}$, and $\perp\!\!\!\perp$ denote the real field, complex field, probability measure, expectation, and (conditional) independence, respectively. Moreover, $\tr(\cdot)$ is the trace operator, $\diag(\cdot)$ forms a diagonal matrix from its arguments, and $\mathbf I_L$ is the $L\times L$ identity matrix. The symbol $\mathbf 0_L$ denotes an all-zero vector or matrix, with dimensions clear from context. Finally, $\N(\bm\mu,\boldsymbol\Sigma)$ and $\CN(\bm\mu,\boldsymbol\Sigma)$ denote real and circularly symmetric complex Gaussian distributions, respectively, and all random elements are defined on standard Borel spaces so that regular conditional distributions exist.

%% file: sections/system_model.tex
\section{System Model: From a Physical Process to Receiver Knowledge}
\label{sec:system_model}

This section models how event-related knowledge is formed from a remote physical process and delivered over a wireless link. 
We first establish the causally evolving physical model, then characterize transmitter-side event inference, and finally specify event-native encoding, wireless transmission, and receiver-side inference.

\subsection{System Overview}

As shown in Fig.~\ref{fig:system}, a transmitter-side agent (Tx-A) equipped with $M$ synchronized heterogeneous sensors observes a remote physical environment over an observation window and communicates with a receiver-side agent (Rx-A) over a block-fading channel. During each window, the physical process evolves and generates multimodal sensor records available to the Tx-A, whereas the Rx-A observes only the received signal and channel state. From the multimodal records, the Tx-A forms a posterior belief over the service-relevant event and generates the transmitted signal based on this belief. The Rx-A forms its own posterior belief over the same event from the wireless evidence and uses it for subsequent inference or decision. 

\begin{figure}[t]
    \centering
    \includegraphics[width=\linewidth]{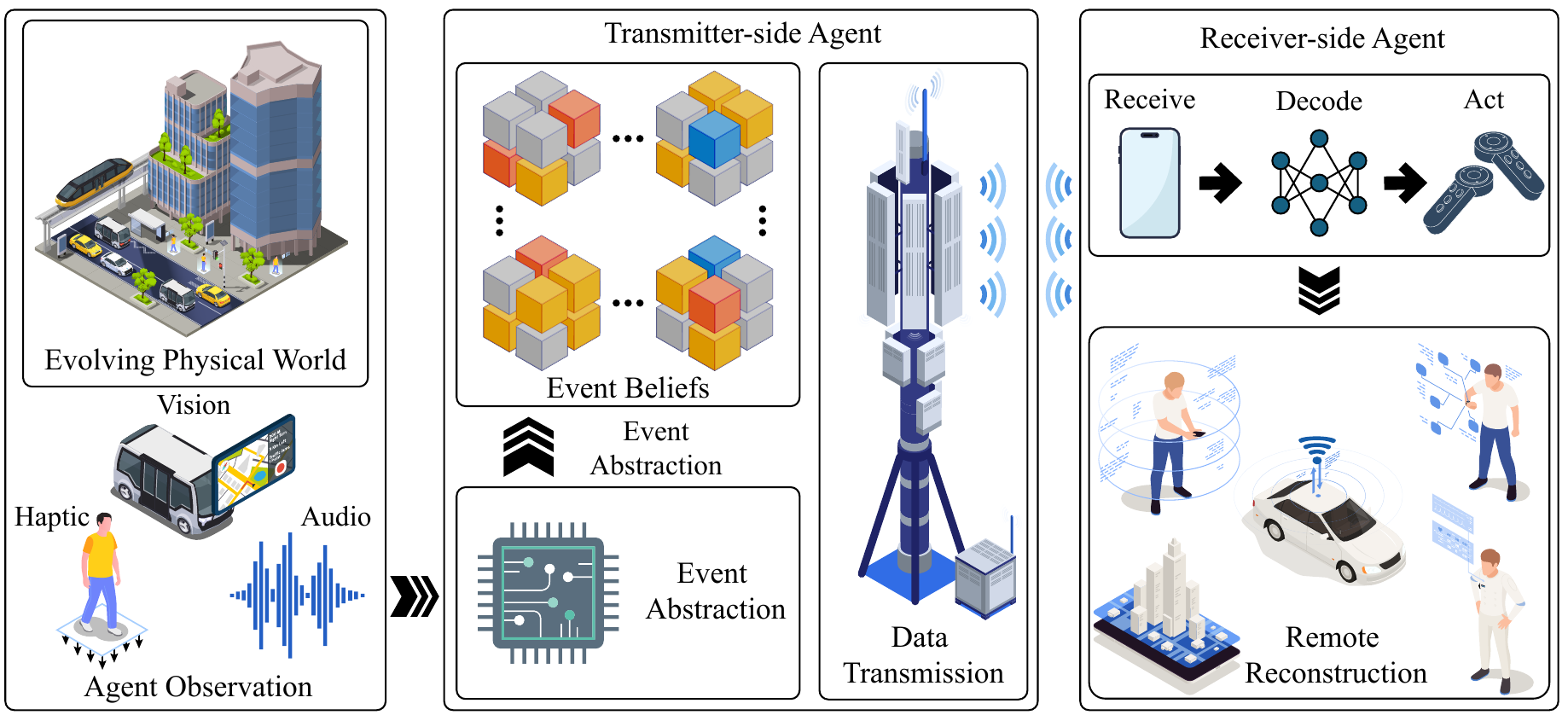}
    \caption{System architecture of the proposed event-native immersive communication system.}
    \label{fig:system}
\end{figure}

\subsection{Physical Process and Multimodal Sensing}

We first characterize the physical process that gives rise to the multimodal evidence available at the Tx-A. The $M$ sensors are synchronized over each observation window and provide complementary measurements of the same physical evolution. Following~\cite{brunton2024mlpde,karniadakis2021physicsml}, we represent the remote environment by a spatiotemporal field $\Xi:\cD\times\cT\to\R^{d_\Xi}$ whose evolution satisfies
\begin{align}
  &\mathcal L_{\boldsymbol\theta}[\Xi](\bm r,t)
  =
  \bm f_{\rm ext}(\bm r,t)+\bm\omega(\bm r,t),
  &&
  (\bm r,t)\in\cD\times\cT,
  \label{eq:physical_source_pde}\\
  &\mathcal B_{\boldsymbol\theta}[\Xi](\bm r,t)
  =
  \bm 0,
  &&
  \bm r\in\partial\cD,
  \label{eq:physical_source_bc}\\
  &\Xi(\bm r,t_0)
  =
  \Xi_0(\bm r),
  &&
  \bm r\in\cD.
  \label{eq:physical_source_ic}
\end{align}
Here, $\bm r$ and $t$ denote the spatial and temporal coordinates, respectively. The parameterized operators $\mathcal L_{\boldsymbol\theta}$ and $\mathcal B_{\boldsymbol\theta}$ describe the physical evolution and boundary behavior and may represent coupled nonlinear partial differential equations together with auxiliary lumped dynamics. The external excitation $\bm f_{\rm ext}$, unresolved disturbance $\bm\omega$, and initial condition $\Xi_0$ account for the variability of the physical environment and jointly induce the field law $P_\Xi$. Let $\cV\subseteq\cD$ denote the region of interest and let $T_{\rm e}$ denote the observation duration. Over the spatial--temporal window $\cW=\cV\times[t_0,t_0+T_{\rm e}]$, the corresponding physical episode is
\begin{align}
  \Xi_{\cW}
  &=
  \mathbf R_{\cW}\Xi,
  \label{eq:physical_window}
\end{align}
where $\mathbf R_{\cW}$ restricts the field to $\cW$. The episode $\Xi_{\cW}$ contains the complete physical evolution within the window, while the Tx-A accesses it only through its sensors. Specifically, sensor $m$, with sensing support $\cV_m\subseteq\cV$, produces at time $t\in[t_0,t_0+T_{\rm e}]$ the causal record
\begin{align}
  X^{(m)}_{[t_0,t]}
  &=
  \mathcal H_{m,t}\!\left(
    \mathbf R_{\cV_m\times[t_0,t]}\Xi,
    V^{(m)}_{[t_0,t]}
  \right),
  \label{eq:sensor_operator}
\end{align}
where $\mathcal H_{m,t}$ is the sensing operator and $V^{(m)}$ collects sensor noise and unmodeled transduction effects. Since $\mathcal H_{m,t}$ depends only on the field history up to $t$, \eqref{eq:sensor_operator} explicitly captures sensing causality. At the end of the observation, define
\begin{align}
  X^{(m)}
  &=
  X^{(m)}_{[t_0,t_0+T_{\rm e}]},
  &
  X^{1:M}
  &=
  \bigl(X^{(1)},\ldots,X^{(M)}\bigr).
  \label{eq:complete_sensor_records}
\end{align}
Collectively, the sensing operators jointly induce the physical observation law
\begin{align}
  X^{1:M}
  &\sim
  P_{X^{1:M}|\Xi_{\cW}}(\cdot|\Xi_{\cW}).
  \label{eq:physical_sensing_law}
\end{align}
This joint law captures the spatial, temporal, and cross-modal dependence induced by the common physical process. The resulting record $X^{1:M}$ constitutes the Tx-A evidence about the remote physical episode.
To determine which information in this evidence is relevant for communication, we next define the event associated with the intended service.

\subsection{Service-Relevant Event and Transmitter-Side Inference}

The physical episode $\Xi_{\cW}$ generally contains information beyond that required by the immersive service. Accordingly, two episodes are regarded as representing the same service-relevant event if they are indistinguishable in distribution under all declared interpretations~\cite{liu2022indirect}. To formalize this notion, let $\mathfrak A$ index a family of interpretation targets, where $Z_a$ denotes the physical quantity of interest associated with interpretation $a\in\mathfrak A$.

\begin{definition}[Service-relevant event]
For two physical episodes $\xi_{\cW}$ and $\xi'_{\cW}$, $\forall a\in\mathfrak A$, define
\begin{align}
  \xi_{\cW}\sim_{\mathfrak A}\xi'_{\cW}
  \quad\Longleftrightarrow\quad
  P_{Z_a|\Xi_{\cW}}(\cdot|\xi_{\cW})
  &=
  P_{Z_a|\Xi_{\cW}}(\cdot|\xi'_{\cW}).
  \label{eq:event_equivalence}
\end{align}
The service-relevant event is the corresponding equivalence class
\begin{align}
  Q
  &=
  [\Xi_{\cW}]_{\sim_{\mathfrak A}}
  =
  \mathcal T_{\rm evt}(\Xi_{\cW})
  \in\cQ,
  \label{eq:event_definition}
\end{align}
where $\cQ$ is the set of equivalence classes and $\mathcal T_{\rm evt}$ is the associated event map.
\end{definition}
Episodes mapped to the same $Q$ may differ in their trajectories and sensor records, yet induce identical conditional laws for all $a\in\mathfrak A$. Hence, up to bijective relabeling, $Q$ is the coarsest deterministic function of $\Xi_{\cW}$ that preserves these laws, thereby providing a common decision-relevant state space for comparing the information available at the two agents. Fix $a\in\mathfrak A$ and write $Z=Z_a$. The corresponding conditional law is
\begin{align}
  Z = \mathcal G_{\rm imm}(Q,E),
  \quad
  Z \sim P_{\rm imm}(\cdot|Q),
  \label{eq:immrep}
\end{align}
where $E$ represents residual task-specific variation. The event $Q$ specifies the aspects of the physical episode relevant to the selected interpretation. However, it is not directly observed by the Tx-A, but inferred from $X^{1:M}$.

To obtain the likelihood required for this inference, we combine the event map with the physical sensing law. Since $\mathcal T_{\rm evt}$ can be many-to-one, the detailed physical episode remains random when $Q$ is given. Marginalizing this within-event variability in \eqref{eq:physical_sensing_law} yields
\begin{align}
  &P_{X^{1:M}|Q}(dx^{1:M}|q)
  \nonumber\\
  &\quad=
  \int
  P_{X^{1:M}|\Xi_{\cW}}
  (dx^{1:M}|\xi_{\cW})
  P_{\Xi_{\cW}|Q}
  (d\xi_{\cW}|q).
  \label{eq:physical_to_sensor_likelihood}
\end{align}
This likelihood captures within-event variability and joint sensing effects. Assuming conditional independence of the interpretation and sensing residuals given $Q$, we obtain
\begin{align}
  Z
  &\mathrel{\perp\!\!\!\perp}
  X^{1:M}
  \mid Q.
  \label{eq:zx_condind}
\end{align}
The prior $P_Q$ and the likelihood in \eqref{eq:physical_to_sensor_likelihood} then determine the Tx-A posterior event belief
\begin{align}
  B_T(dq)
  &=
  P_{Q|X^{1:M}}(dq|X^{1:M}),
  \label{eq:txbelief}
\end{align}
which is a probability-measure-valued random variable on $\cQ$. When the corresponding densities exist, Bayes' rule gives
\begin{align}
  b_T(q;x^{1:M})
  &=
  \frac{
    p_{X^{1:M}|Q}(x^{1:M}|q)p_Q(q)
  }{
    \displaystyle
    \int_{\cQ}
    p_{X^{1:M}|Q}(x^{1:M}|q')p_Q(q')\,dq'
  }.
  \label{eq:txbayes}
\end{align}
The belief $B_T$ captures both the event information supported by the sensor observations and the uncertainty remaining after multimodal sensing. Under \eqref{eq:zx_condind}, it determines the posterior predictive law of $Z$ as
\begin{align}
  P_{Z|X^{1:M}}(dz|X^{1:M})
  &=
  \int_{\cQ}
  P_{\rm imm}(dz|q)B_T(dq).
  \label{eq:z_from_bt}
\end{align}
This predictive relation leads to the following sufficiency result.

\begin{lemma}[Event-belief sufficiency]
\label{prop:belief_sufficiency}
Under \eqref{eq:immrep}, \eqref{eq:physical_to_sensor_likelihood}, and \eqref{eq:zx_condind},
\begin{align}
  Z
  &\mathrel{\perp\!\!\!\perp}
  X^{1:M}
  \mid B_T.
  \label{eq:belief_sufficiency}
\end{align}
Hence, for any Bayesian decision problem with loss $\ell(\hat z,Z)$, an optimal decision rule based on $X^{1:M}$ can be expressed as a function of $B_T$.
\end{lemma}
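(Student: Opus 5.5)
The plan is to show that the conditional law of $Z$ given $X^{1:M}$ is already a measurable function of $B_T$. This fact gives both the conditional independence \eqref{eq:belief_sufficiency} and the decision-theoretic consequence. The starting point is the predictive relation \eqref{eq:z_from_bt}. It follows from \eqref{eq:zx_condind} by disintegration: for any measurable set $A$,
\begin{align}
  \Prob(Z\in A\mid X^{1:M})
  =\E\{\Prob(Z\in A\mid Q,X^{1:M})\mid X^{1:M}\}
  =\int_{\cQ}P_{\rm imm}(A|q)\,B_T(dq).
  \nonumber
\end{align}
The second equality uses that $P_{Z|Q,X^{1:M}}=P_{Z|Q}=P_{\rm imm}$ by \eqref{eq:zx_condind} and \eqref{eq:immrep}. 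Define the map $\Phi_A:\beta\mapsto\int P_{\rm imm}(A|q)\beta(dq)$ on the space of probability measures on $\cQ$, equipped with its usual Borel ($\sigma$-)structure generated by evaluation maps. Then $\Prob(Z\in A\mid X^{1:M})=\Phi_A(B_T)$.

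The key step is measurability. Since $q\mapsto P_{\rm imm}(A|q)$ is a bounded measurable kernel, $\Phi_A$ is measurable: this holds for indicators by the definition of the $\sigma$-algebra on measures and extends by monotone class. This is the only technical point, and the standard Borel assumption in the Notation handles it. Because $\sigma(B_T)\subseteq\sigma(X^{1:M})$, the tower property gives
\begin{align}
  \Prob(Z\in A\mid B_T)=\E\{\Phi_A(B_T)\mid B_T\}=\Phi_A(B_T)=\Prob(Z\in A\mid X^{1:M}).
  \nonumber
\end{align}
Hence $\Prob(Z\in A\mid X^{1:M},B_T)=\Prob(Z\in A\mid B_T)$ a.s. for every $A$, since $B_T$ is a function of $X^{1:M}$. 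This is exactly $Z\perp\!\!\!\perp X^{1:M}\mid B_T$.

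For the decision part, I would write the Bayes risk of any rule $\delta(X^{1:M})$ as $\E\{\int\ell(\delta(X^{1:M}),z)\,P_{Z|X^{1:M}}(dz|X^{1:M})\}$. By the above, the inner posterior expected loss equals $L(\hat z,B_T)=\int\!\!\int\ell(\hat z,z)P_{\rm imm}(dz|q)B_T(dq)$. Pointwise minimization over $\hat z$ then depends on $X^{1:M}$ only through $B_T$. When a minimizer exists, a measurable selection $\hat z^\star(B_T)$ is obtained via a measurable-selection theorem, which is valid on standard Borel spaces. This $\hat z^\star(B_T)$ attains the minimum of the Bayes risk among all $X^{1:M}$-measurable rules. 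If minimizers need not exist, I would argue with $\epsilon$-optimal rules, or simply state that $\inf_\delta$ is achieved, whenever it is achieved at all, by a $B_T$-measurable rule. The main obstacle I expect is this measurable-selection and attainment issue, rather than the conditional independence itself.
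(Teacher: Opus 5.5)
Your proposal is correct and follows essentially the same route as the paper: the predictive kernel $P_{Z|X^{1:M}}(A|X^{1:M})=\int_{\cQ}P_{\rm imm}(A|q)\,B_T(dq)$ is a measurable function of $B_T$, so the tower property gives $P_{Z|B_T}=P_{Z|X^{1:M}}$ and hence $Z\mathrel{\perp\!\!\!\perp}X^{1:M}\mid B_T$. You are more careful than the paper on measurability and on the decision consequence, which the paper only asserts; the selection and attainment issue you flag can also be avoided, since randomized rules are admissible: sample $X'\sim P_{X^{1:M}|B_T}$ and apply any optimal rule to $X'$, which by the conditional independence gives a $B_T$-based rule with the same risk.
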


\begin{proof}
See Appendix~\ref{app:proof_belief_sufficiency}.
\end{proof}

Lemma~\ref{prop:belief_sufficiency} justifies using $B_T$ as the Tx-A information state for event-dependent encoding.

\subsection{Event-Native Encoding and Wireless Transmission}

Based on its posterior event belief, the Tx-A generates an event representation $S$ and a length-$n$ channel input $\bm U$ according to
\begin{align}
  S
  &\sim
  \psi(\cdot|B_T),
  \bm U \sim \phi(\cdot|S),
  \frac{1}{n} \E\!\left[\lVert\bm U\rVert^2\right]\le P,
  \label{eq:embedding}
\end{align}
where $\psi$ is the event representation mapping, $\phi$ is the channel encoder, $P$ is the average transmit-power constraint, and the intermediate representation $S$ separates event representation from channel encoding, respectively.

\begin{remark}[Stream-reconstruction specialization]
If the service requires reconstruction of the full multimodal record and sensing is deterministic given the physical episode, the event map preserves all sensor-domain distinctions. Thus, up to bijective relabeling, $Q=X^{1:M}$, and
\begin{align}
B_T(dq)
&=
\delta_{X^{1:M}}(dq),
&
S
&\sim
\psi(\cdot|X^{1:M}).
\label{eq:stream_specialization}
\end{align}
The receiver therefore reconstructs $X^{1:M}$, recovering stream-oriented transmission as a special case.
\label{rem:stream_event}
\end{remark}

For the general event-native model, the Tx-A transmits $\bm U$ over a block-fading channel whose coefficient $H$ remains constant during the $n$ channel uses. The received block is
\begin{align}
  \bm Y
  &=
  H\bm U+\bm W,
  &
  \bm W
  &\sim
  \CN(\mathbf 0_n,N_0\mathbf I_n).
  \label{eq:channel_model}
\end{align}
Given the channel input and channel state, the received block is independent of all upstream variables; equivalently,
\begin{align}
  P_{\bm Y|
    \bm U,H,\Xi_{\cW},Q,X^{1:M},B_T,S,Z}
  &=
  P_{\bm Y|\bm U,H}.
  \label{eq:physical_channel_markov}
\end{align}
The pair $(\bm Y,H)$ is therefore the wireless evidence available to the Rx-A. Its event belief must be inferred from the observation model induced by the physical sensing process, the Tx-A encoder, and the wireless channel.

\subsection{Receiver-Side Event Inference}

For notational convenience, let $B_T[x^{1:M}]=P_{Q|X^{1:M}}(\cdot|x^{1:M})$ denote the Tx-A belief associated with a realization $x^{1:M}$. Conditioned on an event realization $q$ and channel state $h$, the received-signal law is obtained by averaging over the multimodal observations, event representations, and channel inputs:
\begin{align}
  &P_{\bm Y|Q,H}(d\bm y|q,h)
  \nonumber\\
  =&\int_{\cX}
  \int_{\cS}
  \int_{\C^n}
  P_{X^{1:M}|Q}(dx^{1:M}|q)
  \psi\!\left(ds\middle|B_T[x^{1:M}]\right)
  \nonumber\\
  &\times
  \phi(d\bm u|s)
  P_{\bm Y|\bm U,H}(d\bm y|\bm u,h),
  \label{eq:induced_likelihood}
\end{align}
where $\cX$ and $\cS$ denote the spaces of $X^{1:M}$ and $S$, respectively. This induced likelihood combines the event-conditioned sensing law with the Tx-A encoding rule and the wireless channel. Combining \eqref{eq:induced_likelihood} with the prior $P_Q$, the Rx-A forms the posterior event belief
\begin{align}
  B_R(dq)
  &=
  P_{Q|\bm Y,H}(dq|\bm Y,H).
  \label{eq:rxbelief}
\end{align}
When the corresponding densities exist, we obtain
\begin{align}
  b_R(q;\bm y,h)
  &=
  \frac{
    p_{\bm Y|Q,H}(\bm y|q,h)p_Q(q)
  }{
    \displaystyle
    \int_{\cQ}
    p_{\bm Y|Q,H}(\bm y|q',h)p_Q(q')\,dq'
  }.
  \label{eq:rxbayes}
\end{align}
Thus, $B_R$ is the event belief inferred from the wireless evidence through the end-to-end observation model, rather than reconstructed from $B_T$. Using the same law as the Tx-A, the Rx-A obtains the posterior predictive law
\begin{align}
  P_{Z|\bm Y,H}(dz|\bm Y,H)
  &=
  \int_{\cQ}
  P_{\rm imm}(dz|q)B_R(dq).
  \label{eq:z_from_br}
\end{align}
Based on $B_R$, it then produces an interpretation or decision according to
\begin{align}
  \hat Z
  &\sim
  \pi_{\rm R}(\cdot|B_R),
  \label{eq:rx_decision}
\end{align}
where $\pi_{\rm R}$ denotes the receiver-side inference rule. The beliefs $B_T$ and $B_R$ are posteriors on the same event space but are conditioned on different evidence, and thus induce two information structures for the same family of event-dependent decision problems, providing the basis for the comparison.

%% file: sections/metric.tex
\section{Event-Belief Deficiency: From Decision Risk to Wireless Reliability}
\label{sec:metric_problem}

Immersive communication supports interpretation and decision making about a remote physical environment. Its service performance reflects both the uncertainty remaining after multimodal sensing and the additional degradation introduced by representation and wireless delivery. A communication metric should therefore isolate the latter relative to the information actually available at the Tx-A. Accordingly, in this section we first characterize the decision value of an information state, then establish the Tx-A communication reference, and finally define event-belief deficiency, its reliability criterion, and the associated resource-design problem.

\subsection{From Stream Distortion to Event-Dependent Decision Risk}

A stream-oriented system commonly evaluates the recovered modality records through
\begin{align}
  D_{\rm str}
  &=
  \sum_{m=1}^{M}
  w_m
  \E\!\left[
    d_m\!\left(
      X^{(m)},
      \hat X^{(m)}
    \right)
  \right],
  \label{eq:streammetric}
\end{align}
where $d_m$ and $w_m$ are the distortion measure and weight of modality $m$, respectively. This criterion is appropriate when the reconstructed records are the desired receiver outputs. In immersive communication, however, the records provide intermediate evidence about a remote physical event. Their service value is therefore determined by the event-dependent decisions they enable, which need not be characterized by sensor-domain distortion.

Fix an interpretation $a\in\mathfrak A$ with quantity of interest $Z_a$, and write $Z=Z_a$. Let $\mathsf A$ denote the action space and let $d:\cZ\times\mathsf A\to\R_+$ denote the associated loss.

\begin{definition}[Event-dependent decision risk]
For an information state $\cI$, the minimum decision risk for the task $(Z,d)$ is
\begin{align}
  \mathcal R_{Z,d}(\cI)
  &=
  \inf_{\eta}
  \E\!\left[
    d\!\left(
      Z,
      \eta(\cI)
    \right)
  \right],
  \label{eq:cognition_risk}
\end{align}
where the infimum is over all admissible, possibly randomized decision rules with actions in $\mathsf A$.
\end{definition}

The risk $\mathcal R_{Z,d}(\cI)$ is the best achievable performance for the declared task given $\cI$, independently of a particular inference implementation. When the task is fixed, we write $\mathcal R(\cI)$.

\begin{proposition}[Stream distortion does not determine event risk]
\label{prop:stream_event_nonequivalence}
There exist two Rx-A information states with the same minimum stream-reconstruction mean-squared error but different minimum squared-error risks for the same event-dependent quantity $Z$.
\end{proposition}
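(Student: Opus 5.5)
The plan is to prove the statement by an explicit construction: a small jointly Gaussian instance in which two Rx-A information states recover the sensor record equally well, yet differ in how much they reveal about the event-dependent quantity $Z$. Take $M=2$ scalar modalities and a scalar Gaussian event $Q\sim\N(0,\sigma_Q^2)$. The records are
\begin{align}
  X^{(1)}=Q+V^{(1)},\qquad X^{(2)}=V^{(2)},
\end{align}
with $V^{(1)},V^{(2)}$ i.i.d.\ $\N(0,\sigma_V^2)$ and independent of $Q$. Modality $2$ is a pure nuisance channel that carries sensing noise and nothing about the event. Let $Z=Q+E$ with $E\sim\N(0,\sigma_E^2)$ independent of everything else, so that \eqref{eq:immrep} and \eqref{eq:zx_condind} hold. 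Use unit weights $w_m=1$ and squared error $d_m$ in \eqref{eq:streammetric}, and squared-error loss $d(z,\hat z)=(z-\hat z)^2$ for the event task.

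The two information states are single noisy observations, one per modality:
\begin{align}
  \cI_1=X^{(1)}+N_1,\qquad \cI_2=X^{(2)}+N_2,
\end{align}
where $N_1,N_2$ are independent Gaussian noises with variances chosen as follows. Both minimum risks are conditional-variance computations for jointly Gaussian vectors, so the minimum MSE equals the trace of the conditional covariance.

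\textbf{Stream MMSE under $\cI_1$.} Only $X^{(1)}$ is informative, so the minimum stream MSE is
\begin{align}
  \mathrm{var}\bigl(X^{(1)}\mid\cI_1\bigr)+\mathrm{var}\bigl(X^{(2)}\bigr).
\end{align}

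\textbf{Stream MMSE under $\cI_2$.} Only $X^{(2)}$ is informative, so the minimum stream MSE is
\begin{align}
  \mathrm{var}\bigl(X^{(1)}\bigr)+\mathrm{var}\bigl(X^{(2)}\mid\cI_2\bigr).
\end{align}
Write $\alpha=\mathrm{var}(X^{(1)})-\mathrm{var}(X^{(1)}\mid\cI_1)$ and $\beta=\mathrm{var}(X^{(2)})-\mathrm{var}(X^{(2)}\mid\cI_2)$ for the variance reductions. The two stream MSEs coincide exactly when $\alpha=\beta$.

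\textbf{Event risks.} Under $\cI_2$, the observation is independent of $(Q,E)$, so the minimum risk is $\mathrm{var}(Z)=\sigma_Q^2+\sigma_E^2$. Under $\cI_1$, the minimum risk is $\sigma_E^2+\mathrm{var}(Q\mid\cI_1)$. This is strictly smaller than $\mathrm{var}(Z)$ whenever $\sigma_Q^2>0$ and the noise variance of $N_1$ is finite, because $\cI_1$ is then correlated with $Q$.

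\textbf{Main step: choosing the noise variances.} The only real work is matching the stream MMSEs, i.e.\ enforcing $\alpha=\beta$. The reduction $\alpha$ ranges over $[0,\sigma_Q^2+\sigma_V^2)$ as $\mathrm{var}(N_1)$ decreases from $\infty$ to $0$, and $\beta$ ranges over $[0,\sigma_V^2)$ as $\mathrm{var}(N_2)$ does the same. First fix any finite $\mathrm{var}(N_1)$ such that $0<\alpha<\sigma_V^2$. By continuity and monotonicity of $\beta$ in $\mathrm{var}(N_2)$, there is then a $\mathrm{var}(N_2)$ with $\beta=\alpha$. For full concreteness, take $\sigma_Q^2=\sigma_V^2=1$ and solve the resulting scalar equation explicitly.

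With this choice the two information states have identical minimum stream-reconstruction MSE but different minimum squared-error event risks $\mathcal R(\cI_1)<\mathcal R(\cI_2)$, which proves the proposition. I would close by noting why this works: stream MSE weights the unit variance of the nuisance modality exactly as much as the event-bearing modality, while $\mathcal R$ ignores the nuisance modality entirely.
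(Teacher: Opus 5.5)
Your construction is correct and uses essentially the same idea as the paper's proof: one information state reveals the event-bearing stream coordinate, the other an equally variable nuisance coordinate, so the minimum stream MSE cannot distinguish them while the squared-error risk for $Z$ can. The paper takes the noiseless symmetric case $X=[Q_1,Q_2]^T$, $Z=Q_1$, $\cI_1=Q_1$, $\cI_2=Q_2$, where both stream MMSEs equal $1$ automatically and the event risks are $0$ and $1$, so no noise-variance matching is needed. Your matching step, left unfinished, closes with e.g.\ $\mathrm{var}(N_1)=6$, $\mathrm{var}(N_2)=1$, which gives $\alpha=\beta=1/2$ when $\sigma_Q^2=\sigma_V^2=1$.
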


\begin{proof}
See Appendix~\ref{app:proof_stream_event_nonequivalence}.
\end{proof}

Proposition~\ref{prop:stream_event_nonequivalence} shows that sensor-domain reconstruction quality does not generally determine the decision capability available for an immersive task. Conversely, a task-specific reconstruction objective remains included as a particular choice of $Z$, $\mathsf A$, and $d$. Decision risk therefore provides a common performance measure for comparing the event information available before and after wireless delivery. This comparison first requires an appropriate pre-communication reference.

\subsection{Communication Reference and Risk Decomposition}
A communication-induced loss should be defined relative to the information available prior to transmission. Direct access to the latent event $Q$ provides an oracle reference, thereby excluding uncertainty due to multimodal sensing. We therefore use the Tx-A belief $B_T$, inferred from the sensor observations, as the pre-communication reference. Specifically, we consider three information states:
\begin{align}
  \cI_{\rm ref}
  &=
  B_T,
  &
  \cI_{\rm rx}
  &=
  (\bm Y,H),
  &
  \cI_0
  &=
  \varnothing,
  \label{eq:information_states}
\end{align}
where $\cI_0=\varnothing$ denotes the absence of event-window observations, while the event prior and task model remain known. Their corresponding decision risks are
\begin{align}
  D_{\rm ref}
  &=
  \mathcal R(B_T),
  &
  D_{\rm rx}(\psi,\phi)
  &=
  \mathcal R(\bm Y,H),
  &
  D_0
  &=
  \mathcal R(\varnothing).
  \label{eq:cognition_risks}
\end{align}
The prior risk $D_0 $is the minimum risk attainable without window-specific observations. The reference risk $D_{\rm ref}$ is the minimum risk supported by multimodal sensing, prior to representation and wireless transmission, whereas the receiver risk $D_{\rm rx}$ is the corresponding risk after the complete communication chain. In general, $D_{\rm ref}>0$ because $B_T$ remains subject to sensing noise, limited coverage, and unresolved physical variability. By Lemma~\ref{prop:belief_sufficiency},

\begin{align}
  \mathcal R(X^{1:M}) = \mathcal R(B_T),
  \quad
  \mathcal R(\bm Y,H) = \mathcal R(B_R).
  \label{eq:belief_risk_sufficiency}
\end{align}
Hence, the comparison in \eqref{eq:cognition_risks} is equivalently between the event beliefs available at the Tx-A and Rx-A. Given $B_T=b$, the representation mapping, channel encoder, and wireless channel induce the end-to-end kernel
\begin{align}
  &K_{\psi,\phi}(d\bm y,dh|b)\nonumber\\
  =
  &P_H(dh)
  \int_{\cS}
  \psi(ds|b)
  \int_{\C^n}
  \phi(d\bm u|s)
  P_{\bm Y|\bm U,H}(d\bm y|\bm u,h).
  \label{eq:communication_kernel}
\end{align}
Consequently,
\begin{align}
  P_{\bm Y,H|Q}(d\bm y,dh|q)
  &=
  \int
  K_{\psi,\phi}(d\bm y,dh|b)
  P_{B_T|Q}(db|q).
  \label{eq:blackwell_kernel}
\end{align}
Equation~\eqref{eq:blackwell_kernel} implies that the wireless evidence is generated from $B_T$ through a stochastic kernel conditionally independent of $Q$. Hence, any Rx-A decision rule can be replicated from $B_T$ by sampling a virtual receiver observation according to $K_{\psi,\phi}$ and applying the same rule.

\begin{proposition}[Communication-induced risk ordering]
\label{prop:risk_ordering}
For every event-dependent task $(Z,d)$ with finite risk,
\begin{align}
  D_{\rm ref}
  &\le
  D_{\rm rx}(\psi,\phi)
  \le
  D_0.
  \label{eq:distorder}
\end{align}
\end{proposition}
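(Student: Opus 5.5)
The plan is to prove both inequalities in \eqref{eq:distorder} by a garbling/replication argument on decision rules. I will work directly with the definition \eqref{eq:cognition_risk}, where the infimum runs over possibly randomized rules. In each case I will show that any rule admissible for the less informative state can be simulated by a rule for the more informative state, with identical joint law of $(Z,\text{action})$.

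\textbf{Upper bound $D_{\rm rx}\le D_0$.} This is the easy direction. Any rule $\eta_0$ based on $\cI_0=\varnothing$ is a fixed (possibly randomized) action $A\sim\eta_0$ drawn independently of everything. It is also admissible as a rule based on $(\bm Y,H)$ that simply ignores its argument, so it attains the same expected loss $\E[d(Z,A)]$. Taking the infimum over $\eta_0$ gives $\mathcal R(\bm Y,H)\le\mathcal R(\varnothing)$. Finiteness of the risks guarantees that the infima are well defined and no $\infty-\infty$ issues arise.

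\textbf{Lower bound $D_{\rm ref}\le D_{\rm rx}$.} Fix any receiver rule $\eta$ acting on $(\bm Y,H)$. Define a randomized rule on $B_T$ in two steps: given $B_T=b$, first draw a virtual observation $(\tilde{\bm Y},\tilde H)\sim K_{\psi,\phi}(\cdot|b)$ from \eqref{eq:communication_kernel}, using independent auxiliary randomness, and then output $\eta(\tilde{\bm Y},\tilde H)$. The key claim is that $(Z,\tilde{\bm Y},\tilde H)$ has the same joint law as $(Z,\bm Y,H)$. To establish it, I will use the Markov chain $Z - Q - X^{1:M} - B_T - S - \bm U - (\bm Y,H)$. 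This chain follows from \eqref{eq:zx_condind}, the fact that $B_T$ is a function of $X^{1:M}$, the encoding structure \eqref{eq:embedding}, and the channel Markov property \eqref{eq:physical_channel_markov}, with $H$ independent of the upstream variables. From the chain, $P_{\bm Y,H|Z,B_T}=K_{\psi,\phi}(\cdot|B_T)$, which strengthens \eqref{eq:blackwell_kernel} by conditioning additionally on $Z$. By construction the virtual pair satisfies the same conditional law given $(Z,B_T)$. Hence the joint laws coincide, and $\E[d(Z,\eta(\tilde{\bm Y},\tilde H))]=\E[d(Z,\eta(\bm Y,H))]$. Taking the infimum over $\eta$ yields $\mathcal R(B_T)\le\mathcal R(\bm Y,H)$. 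By \eqref{eq:belief_risk_sufficiency} this also equals $\mathcal R(B_R)$, though that identity is not needed here.

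\textbf{Main obstacle.} The step I expect to need the most care is the conditional-independence claim $(\bm Y,H)\perp\!\!\!\perp Z\mid B_T$. Equation \eqref{eq:blackwell_kernel} only conditions on $Q$, whereas the risk involves $Z$. I would derive the claim from two facts. First, $Z\perp\!\!\!\perp X^{1:M}\mid Q$ together with Lemma~\ref{prop:belief_sufficiency} gives $Z\perp\!\!\!\perp X^{1:M}\mid B_T$. Second, $(\bm Y,H)$ is generated from $B_T$ using randomness independent of $(Q,X^{1:M},Z)$, since $\psi$ takes only $B_T$ as input and $\phi$ and the channel use fresh noise. A measurability check is also needed: $K_{\psi,\phi}$ must be a genuine Markov kernel so that the simulated rule is admissible. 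This holds under the standing assumption that all spaces are standard Borel.
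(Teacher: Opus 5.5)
Your proposal is correct and follows the paper's route: the lower bound by Blackwell-type replication, i.e., sampling a virtual $(\tilde{\bm Y},\tilde H)\sim K_{\psi,\phi}(\cdot|B_T)$ and applying the receiver's rule, and the upper bound by letting the Rx-A ignore its observation. Your explicit check that $(\bm Y,H)\mathrel{\perp\!\!\!\perp} Z\mid B_T$ fills in a step the paper leaves implicit, since it only states the $Q$-conditioned identity \eqref{eq:blackwell_kernel}; note that your fresh-randomness fact alone already implies this, so the appeal to Lemma~\ref{prop:belief_sufficiency} there is not strictly needed.
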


\begin{proof}
The simulation argument above gives $D_{\rm ref}\le D_{\rm rx}$ and is formalized by Blackwell's comparison of statistical experiments~\cite{blackwell1953,goela2020channel}. The inequality $D_{\rm rx}\le D_0$ follows because the Rx-A can ignore its observation and apply the prior-optimal rule.
\end{proof}

The ordering in \eqref{eq:distorder} separates the decision value supplied by sensing from the degradation introduced by communication. In particular, $D_0-D_{\rm ref}$ is the task-specific decision improvement provided by multimodal sensing, whereas $D_{\rm rx}-D_{\rm ref}$ is the portion of that capability lost through representation and wireless delivery. 

\subsection{Event-Belief Deficiency}

\begin{definition}[Event-belief deficiency]
For an event-dependent task $(Z,d)$ and mappings $(\psi,\phi)$, the event-belief deficiency is
\begin{align}
  \EBD_{Z,d}(\psi,\phi)
  &=
  \mathcal R_{Z,d}(B_R)
  -
  \mathcal R_{Z,d}(B_T)
  \nonumber\\
  &=
  D_{\rm rx}(\psi,\phi)
  -
  D_{\rm ref}.
  \label{eq:EBD}
\end{align}
\end{definition}
EBD quantifies the decision degradation introduced by the communication chain after multimodal sensing. For fixed physical and sensing models, $D_{\rm ref}$ is communication-independent, whereas $D_{\rm rx}$ depends on the representation, encoding, and wireless channel. Thus, EBD isolates the communication-induced loss relative to the event information available at the Tx-A. We omit the task index $(Z,d)$ when clear from context. From \eqref{eq:distorder},
\begin{align}
  0
  &\le
  \EBD(\psi,\phi)
  \le
  D_0-D_{\rm ref}.
  \label{eq:ebd_bounds}
\end{align}
The lower bound follows because communication cannot outperform direct access to $B_T$, while the upper bound corresponds to complete loss of the sensing-derived decision value. Replacing the receiver evidence with a stochastically more informative state cannot increase EBD. For a given task, $\EBD=0$ indicates that the Rx-A achieves the same minimum risk as the Tx-A, without requiring $B_R=B_T$ realization-wise. If EBD vanishes for all tasks in a specified decision family, the two information states are decision-equivalent over that family. Since EBD depends on the quantity of interest and the loss function, we normalize it by the total decision value provided by sensing. For $D_0>D_{\rm ref}$, define

\begin{align}
  \overline{\EBD}_{Z,d}(\psi,\phi)
  &=
  \frac{
    D_{\rm rx}(\psi,\phi)-D_{\rm ref}
  }{
    D_0-D_{\rm ref}
  }.
  \label{eq:normalizedEBD}
\end{align}
The normalized EBD is the fraction of the sensing-supported decision value lost through communication. Its endpoints correspond to complete preservation of that value and no decision gain over the prior, respectively. 

\begin{corollary}[Log-loss representation of EBD]
\label{cor:log_ebd}
Let the receiver action be a predictive distribution $\hat p_Z$ and
\begin{align}
  d_{\log}(z,\hat p_Z)
  &=
  -\log \hat p_Z(z).
  \label{eq:log_loss}
\end{align}
Whenever the corresponding quantities are finite,
\begin{align}
  \EBD_{\log}(\psi,\phi)
  &=
  I\!\left(
    Z;
    B_T
    \mid
    \bm Y,H
  \right)
  \nonumber\\
  &=
  \mathsf H(Z|\bm Y,H)
  -
  \mathsf H(Z|B_T),
  \label{eq:ebd_information}
\end{align}
where $\mathsf H$ denotes entropy or differential entropy, as appropriate.
\end{corollary}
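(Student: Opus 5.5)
The plan is to first show that, under log loss, the minimum risk of any information state $\cI$ equals the conditional entropy $\mathsf H(Z|\cI)$. Then EBD reduces to a difference of conditional entropies, and a Markov-chain argument turns that difference into the stated conditional mutual information.

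\emph{Step 1: log-loss risk is conditional entropy.} For any rule $\eta$ mapping $\cI$ to a predictive density $\hat p_Z$, condition on $\cI$ and write
\begin{align*}
  \E\!\left[-\log \hat p_Z(Z)\,\middle|\,\cI\right]
  =
  \mathsf H(Z|\cI=\iota)
  +
  D\!\left(P_{Z|\cI}(\cdot|\iota)\,\middle\|\,\hat p_Z\right).
\end{align*}
The divergence is nonnegative and vanishes exactly when $\hat p_Z=P_{Z|\cI}$. Randomization cannot help, because the conditional risk is then an average of such terms. Taking expectations over $\cI$ therefore gives $\mathcal R_{\log}(\cI)=\mathsf H(Z|\cI)$ whenever finite. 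Applying this to $\cI=(\bm Y,H)$ and $\cI=B_T$, and using \eqref{eq:belief_risk_sufficiency}, yields
\begin{align*}
  \EBD_{\log}=\mathsf H(Z|\bm Y,H)-\mathsf H(Z|B_T).
\end{align*}
This is the second equality in \eqref{eq:ebd_information}.

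\emph{Step 2: the Markov structure.} I would establish the chain
\begin{align*}
  Z - B_T - (\bm Y,H).
\end{align*}
Lemma~\ref{prop:belief_sufficiency} gives $Z\perp\!\!\!\perp X^{1:M}\mid B_T$, and $B_T$ is a function of $X^{1:M}$. By \eqref{eq:embedding}, \eqref{eq:physical_channel_markov} and the construction of \eqref{eq:communication_kernel}, $(\bm Y,H)$ is generated from $B_T$ through $K_{\psi,\phi}$, using randomness ($S$, $\bm U$, $\bm W$, $H$) independent of $(Z,Q,X^{1:M})$ given $B_T$. Hence $(\bm Y,H)\perp\!\!\!\perp Z\mid B_T$, and consequently $\mathsf H(Z|B_T)=\mathsf H(Z|B_T,\bm Y,H)$.

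\emph{Step 3: conclude.} Substituting into Step 1 gives
\begin{align*}
  \mathsf H(Z|\bm Y,H)-\mathsf H(Z|B_T,\bm Y,H)=I(Z;B_T\mid\bm Y,H),
\end{align*}
which is the first equality.

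\emph{Main obstacle.} The delicate part is measure-theoretic. $B_T$ is a probability-measure-valued random variable, and $Z$ may be continuous with only differential entropy. I would avoid manipulating $\mathsf H(Z|B_T)$ directly where possible. Instead, I would write $I(Z;B_T|\bm Y,H)$ as an expected KL divergence
\begin{align*}
  \E\, D\!\left(P_{Z|B_T,\bm Y,H}\,\middle\|\,P_{Z|\bm Y,H}\right),
\end{align*}
which is well defined on standard Borel spaces. The Markov property replaces $P_{Z|B_T,\bm Y,H}$ by $P_{Z|B_T}$. Under the finiteness assumption, this expected divergence splits into the difference of expected log-losses. That splitting makes the cancellation of infinite terms legitimate, and it is the step I would check most carefully.
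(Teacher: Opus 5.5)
Your proposal is correct and follows the same route as the paper: identify the minimum log-loss risk with conditional entropy, then use the conditional independence $Z\mathrel{\perp\!\!\!\perp}(\bm Y,H)\mid B_T$ induced by the communication chain to replace $\mathsf H(Z|B_T)$ by $\mathsf H(Z|B_T,\bm Y,H)$ and read off $I(Z;B_T\mid\bm Y,H)$. The only difference is that you supply details the paper cites or asserts: the KL decomposition behind the entropy characterization, and an explicit justification of the Markov chain from \eqref{eq:embedding}, \eqref{eq:physical_channel_markov}, and \eqref{eq:communication_kernel}.
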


\begin{proof}
Under logarithmic loss, the minimum decision risk is the corresponding conditional entropy~\cite{cover2006elements}. Since the communication chain implies
$Z\mathrel{\perp\!\!\!\perp}(\bm Y,H)\mid B_T$,
we have
$\mathsf H(Z|B_T,\bm Y,H)=\mathsf H(Z|B_T)$, which yields
\eqref{eq:ebd_information}.
\end{proof}

The log-loss EBD is the information about $Z$ available in the Tx-A belief but absent from the wireless evidence. 
The definitions above characterize average degradation over the operating conditions; reliability additionally requires control of large state-dependent deficiencies.

\subsection{Event-Belief Reliability and Resource Design}
Average EBD may obscure severe degradation under unfavorable operating conditions. Let $\Theta$ denote the state used to stratify service performance, such as the channel power gain or a prescribed context. To preserve the communication-induced interpretation, $D_{\rm ref}$ and $D_{\rm rx}$ are evaluated under the same realization. Here, $\Theta$ is only a conditioning variable and does not augment the agents’ information; thus, the decision rule remains a function of $\cI$. Any operating variable available to the decision maker must instead be included in $\cI$. For $\Theta=\theta$, define the conditional decision risk as
\begin{align}
  \mathcal R_{\theta}(\cI)
  &=
  \inf_{\eta}
  \E\!\left[
    d\!\left(
      Z,
      \eta(\cI)
    \right)
    \,\middle|\,
    \Theta=\theta
  \right],
  \label{eq:conditional_risk}
\end{align}
where the infimum is over the same class of admissible decision rules as in \eqref{eq:cognition_risk}. The corresponding state-dependent EBD is
\begin{align}
  \EBD_{\theta}(\psi,\phi)
  &=
  \mathcal R_{\theta}(\bm Y,H)
  -
  \mathcal R_{\theta}(B_T).
  \label{eq:conditional_ebd}
\end{align}
When the stochastic-degradation relation in \eqref{eq:blackwell_kernel} remains valid conditionally on $\Theta=\theta$, the risk-ordering argument applies statewise and gives
\begin{align}
  \EBD_{\theta}(\psi,\phi)
  &\ge
  0.
  \label{eq:conditional_ebd_nonnegative}
\end{align}
Thus, $\EBD_{\Theta}$ describes how the communication-induced decision loss varies across operating states, while its expectation recovers the corresponding average comparison under the same conditioning convention.

\begin{definition}[Event-belief reliability]
A communication scheme $(\psi,\phi)$ provides $(\delta,\epsilon)$ event-belief reliability if
\begin{align}
  P_{{\rm EB}\text{-}{\rm out}}
  \!\left(
    \delta;
    \psi,\phi
  \right)
  &\triangleq
  \Prob\!\left\{
    \EBD_{\Theta}(\psi,\phi)
    >
    \delta
  \right\}
  \le
  \epsilon,
  \label{eq:eb_reliability}
\end{align}
where $\delta$ is the maximum tolerable state-dependent deficiency and $\epsilon$ is the allowed outage probability.
\end{definition}

The outage probability measures how often representation and wireless delivery cause the state-dependent decision-risk gap to exceed the service tolerance. It therefore combines the declared task, sensing uncertainty, communication design, and operating conditions within a single reliability criterion. 
This reliability requirement leads to the resource-constrained design problem
\begin{align}
  \mathsf{(P_{\rm EB})}:\quad
  \min_{\psi,\phi,P,n}\quad
  &C_{\rm res}(P,n)
  \nonumber\\
  \mathrm{s.t.}\quad
  &
  P_{{\rm EB}\text{-}{\rm out}}
  \!\left(
    \delta;
    \psi,\phi
  \right)
  \le
  \epsilon,
  \nonumber\\
  &
  (\psi,\phi)
  \in
  \mathfrak F(P,n),
  \qquad
  n
  \le
  n_{\max},
  \label{eq:canonical_eb_design}
\end{align}
where $\mathfrak F(P,n)$ is the set of admissible length-$n$ mappings satisfying the power constraint in \eqref{eq:embedding}. Accordingly, problem~\eqref{eq:canonical_eb_design} establishes a general framework for jointly designing the physics-informed mapping and resource allocation subject to a prescribed event-belief reliability requirement.

%% file: sections/analysis.tex
\section{Performance Analysis of Event-Belief Delivery over Rayleigh Fading}
\label{sec:analysis}
In this section, we analyze event-belief delivery over Rayleigh fading under the Gaussian and mode-separable specializations. We derive the LD-EBD-induced received-SNR threshold, characterize the resulting outage probability and required transmit SNR, and further relate the threshold to radio-resource allocation through the delivered event-mode precisions.

\subsection{Gaussian Event Model and Reference Covariances}

We first specialize the event, sensing, and task models to a Gaussian setting. Let $\bm q$, $\bm x$, and $\bm z$ denote the Gaussian vector representations of the event $Q$, the stacked multimodal observation $X^{1:M}$, and the quantity of interest $Z$, respectively. We assume
\begin{align}
  \bm x
  &=
  \mathbf C_x\bm q+\bm v,
  &
  \bm v
  &\sim
  \N\!\left(
    \mathbf 0,
    \mathbf R_x
  \right),
  \label{eq:lgobs}\\
  \bm z
  &=
  \mathbf C_z\bm q+\bm e,
  &
  \bm e
  &\sim
  \N\!\left(
    \mathbf 0,
    \mathbf R_z
  \right),
  \label{eq:lgrep}
\end{align}
where $\bm q$, $\bm v$, and $\bm e$ are mutually independent,
$\boldsymbol{\Sigma}_{\bm q}\succ\mathbf 0$,
$\mathbf R_x\succ\mathbf 0$, and
$\mathbf R_z\succ\mathbf 0$. This specialization is consistent with classical Gaussian indirect-source models~\cite{oohama2014indirect}. For a realized multimodal observation $\bm x$, the Tx-A belief is
\begin{align}
  B_T
  &=
  P_{\bm q|\bm x}(\cdot|\bm x)
  =
  \N\!\left(
    \hat{\bm q}_T,
    \boldsymbol{\Sigma}_{\bm q|\bm x}
  \right),
  \label{eq:gaussian_tx_belief}
\end{align}
where $\hat{\bm q}_T=\E[\bm q|\bm x]$ and
\begin{align}
  \boldsymbol{\Sigma}_{\bm q|\bm x}
  &=
  \left(
    \boldsymbol{\Sigma}_{\bm q}^{-1}
    +
    \mathbf C_x^{T}
    \mathbf R_x^{-1}
    \mathbf C_x
  \right)^{-1}.
  \label{eq:sigqx}
\end{align}
Although the posterior mean $\hat{\bm q}_T$ depends on the specific observation realization, the posterior covariance does not and quantifies the residual event uncertainty after multimodal sensing. The corresponding covariance matrices of $\bm z$ under the prior-only and Tx-A reference beliefs are
\begin{align}
  \boldsymbol{\Sigma}_{\bm z|0}
  &=
  \mathbf C_z
  \boldsymbol{\Sigma}_{\bm q}
  \mathbf C_z^{T}
  +
  \mathbf R_z,
  \label{eq:sigz0}\\
  \boldsymbol{\Sigma}_{\bm z|{\rm ref}}
  &=
  \mathbf C_z
  \boldsymbol{\Sigma}_{\bm q|\bm x}
  \mathbf C_z^{T}
  +
  \mathbf R_z.
  \label{eq:sigzref}
\end{align}
Since
$\boldsymbol{\Sigma}_{\bm q|\bm x}
\preceq
\boldsymbol{\Sigma}_{\bm q}$,
we have
\begin{align}
  \boldsymbol{\Sigma}_{\bm z|{\rm ref}}
  &\preceq
  \boldsymbol{\Sigma}_{\bm z|0}.
  \label{eq:reference_covariance_order}
\end{align}
Thus, $\boldsymbol{\Sigma}_{\bm z|0}$ represents the uncertainty before incorporating the current multimodal observations, whereas
$\boldsymbol{\Sigma}_{\bm z|{\rm ref}}$ represents the residual uncertainty after sensing but before wireless delivery. Under Gaussian log loss, communication-induced degradation can therefore be quantified through the covariance inflation relative to the Tx-A reference.

\subsection{Mode-Separable Belief Delivery}

We specialize the channel model in \eqref{eq:channel_model} to a coherent block-Rayleigh fading channel with $H\sim\CN(0,\beta)$. The channel coefficient remains constant over the $n$ channel uses within each fading block, and the channel state information (CSI) assumptions follow those specified in Section~\ref{sec:system_model}. Define the transmit SNR as $\rho=P/N_0$, the channel power gain as $G=|H|^2$, and the instantaneous received SNR as $\Gamma=\rho G$. Since $H\sim\CN(0,\beta)$, $G$ is exponentially distributed with mean $\beta$. At a realized received SNR $s$, the Rx-A belief is Gaussian:
\begin{align}
  B_R
  &=
  \N\!\left(
    \hat{\bm q}_R,
    \boldsymbol{\Sigma}_{\bm q|R}(s)
  \right),
  \label{eq:gaussian_rx_belief}
\end{align}
which induces
\begin{align}
  \boldsymbol{\Sigma}_{\bm z|R}(s)
  &=
  \mathbf C_z
  \boldsymbol{\Sigma}_{\bm q|R}(s)
  \mathbf C_z^T
  +
  \mathbf R_z.
  \label{eq:sigzR}
\end{align}
The dependence on $s$ is governed by the representation, encoding, and receiver mappings. We model this dependence through a mode-separable covariance structure. Let $r$ denote the dimension of $\bm z$.

\begin{assumption}[Mode-separable covariance structure]
\label{ass:mode_separable}
The prior-only, Tx-A reference, and Rx-A covariance matrices of $\bm z$ are simultaneously diagonalizable in a fixed orthonormal basis. We refer to the coordinates in this common basis as event modes. In this basis,
\begin{align}
  \boldsymbol{\Sigma}_{\bm z|0}
  &=
  \diag\!\left(
    \frac{1}{a_1},
    \ldots,
    \frac{1}{a_r}
  \right),
  \label{eq:diag0}\\
  \boldsymbol{\Sigma}_{\bm z|{\rm ref}}
  &=
  \diag\!\left(
    \frac{1}{a_1+b_{1,{\rm ref}}},
    \ldots,
    \frac{1}{a_r+b_{r,{\rm ref}}}
  \right),
  \label{eq:diagref}\\
  \boldsymbol{\Sigma}_{\bm z|R}(s)
  &=
  \diag\!\left(
    \frac{1}{a_1+\gamma_1(s)},
    \ldots,
    \frac{1}{a_r+\gamma_r(s)}
  \right),
  \label{eq:diagR}
\end{align}
where
\begin{align}
  a_i
  &>
  0,
  &
  0
  &\le
  \gamma_i(s)
  \le
  b_{i,{\rm ref}},
  \qquad
  b_{i,{\rm ref}}
  \ge
  0.
  \label{eq:delivered_precision_bounds}
\end{align}
\end{assumption}

This structure arises naturally when the sensing, representation, and delivery mappings are aligned with a common task-relevant basis. For event mode $i$, $a_i$ is the prior precision, $b_{i,{\rm ref}}$ is the sensing-induced precision increment available at the Tx-A, and $\gamma_i(s)$ is the portion of this increment preserved at the Rx-A. The bounds in \eqref{eq:delivered_precision_bounds} imply
\begin{align}
  \boldsymbol{\Sigma}_{\bm z|{\rm ref}}
  &\preceq
  \boldsymbol{\Sigma}_{\bm z|R}(s)
  \preceq
  \boldsymbol{\Sigma}_{\bm z|0},
  \label{eq:mode_covariance_order}
\end{align}
which is consistent with the communication-induced risk ordering established in Proposition~\ref{prop:risk_ordering}. For tractable threshold analysis, we adopt the hard-ceiling precision law
\begin{align}
  \gamma_i^{\rm H}(s)
  &=
  \min\!\left\{
    b_{i,{\rm ref}},
    b_i s
  \right\},
  \qquad
  b_i>0.
  \label{eq:hard_ceiling_precision}
\end{align}
The coefficient $b_i$ characterizes the delivery efficiency associated with the prescribed representation and communication design. Accordingly, the delivered precision increases linearly with the received SNR and saturates at the sensing-limited ceiling $b_{i,{\rm ref}}$. 

\subsection{Log-Det EBD and Received-SNR Threshold}

For a fixed transmit SNR $\rho$, let the channel power gain $G$ serve as the operating state in \eqref{eq:conditional_ebd}. At $G=g$, the instantaneous received SNR is $s=\rho g$. The Gaussian specialization of the state-dependent log-loss EBD is then characterized by the corresponding posterior covariance matrices.

\begin{proposition}[Gaussian log-det representation]
\label{prop:gaussian_logdet_ebd}
For any realized received SNR $s\ge 0$, the state-dependent log-loss EBD is
\begin{align}
  \LDEBD(s)
  &=
  \frac{1}{2}
  \log
  \frac{
    \det\!\left(
      \boldsymbol{\Sigma}_{\bm z|R}(s)
    \right)
  }{
    \det\!\left(
      \boldsymbol{\Sigma}_{\bm z|{\rm ref}}
    \right)
  }.
  \label{eq:LDEBDmatrix}
\end{align}
\end{proposition}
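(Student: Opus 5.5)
The plan is to start from the state-dependent version of Corollary~\ref{cor:log_ebd}. Conditioning on the operating state $G=g$, i.e., on a realized received SNR $s=\rho g$, the log-loss EBD in \eqref{eq:conditional_ebd} becomes
\begin{align}
  \LDEBD(s)
  =
  \mathsf H(\bm z|\bm Y,H,\Theta=\theta)
  -
  \mathsf H(\bm z|B_T,\Theta=\theta).
  \nonumber
\end{align}
This identity follows from two facts. First, under log loss the conditional minimum risk \eqref{eq:conditional_risk} equals the conditional differential entropy. Second, by \eqref{eq:physical_channel_markov}, the degradation relation \eqref{eq:blackwell_kernel} holds conditionally on the state, because the channel gain is independent of $(\bm q,\bm x,\bm z)$. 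So the task reduces to evaluating two Gaussian conditional entropies.

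\textbf{Reference term.} By \eqref{eq:lgobs}--\eqref{eq:lgrep}, the posterior predictive of $\bm z$ given $B_T$ follows from \eqref{eq:z_from_bt}. It is $\N(\mathbf C_z\hat{\bm q}_T,\boldsymbol\Sigma_{\bm z|{\rm ref}})$, since $\bm e$ is independent of $(\bm q,\bm x)$ and $\bm q|\bm x$ is Gaussian with covariance \eqref{eq:sigqx}. This covariance does not depend on the realization of $\bm x$. The expected conditional entropy is therefore the deterministic value $\tfrac12\log\big((2\pi e)^r\det\boldsymbol\Sigma_{\bm z|{\rm ref}}\big)$. The same holds given $\Theta$, because $\Theta$ is independent of the sensing chain.

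\textbf{Receiver term.} By \eqref{eq:gaussian_rx_belief} and \eqref{eq:z_from_br}, the Rx-A predictive law at received SNR $s$ is $\N(\mathbf C_z\hat{\bm q}_R,\boldsymbol\Sigma_{\bm z|R}(s))$ with covariance \eqref{eq:sigzR}. Again the covariance depends only on $s$ and not on $\bm Y$. The conditional entropy is therefore $\tfrac12\log\big((2\pi e)^r\det\boldsymbol\Sigma_{\bm z|R}(s)\big)$. Subtracting the two terms cancels the $(2\pi e)^r$ factors and gives \eqref{eq:LDEBDmatrix}. Finiteness holds because $\mathbf R_z\succ\mathbf 0$ makes both determinants positive. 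Nonnegativity is consistent with \eqref{eq:mode_covariance_order}.

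\textbf{Main obstacle.} The delicate step is justifying that the conditional entropies given $(\bm Y,H)$ and given $B_T$ are governed solely by these deterministic posterior covariances at fixed $s$. Concretely, I must confirm two things: that conditioning on the state $G=g$ is compatible with $H$ being part of the Rx-A information, and that the Gaussian Rx-A belief \eqref{eq:gaussian_rx_belief} is the exact posterior with an $s$-dependent but $\bm Y$-independent covariance. I would handle this by taking \eqref{eq:gaussian_rx_belief} as a modeling premise. I would then note that the realization-independence of Gaussian posterior covariances makes the expectation over $\bm Y$ trivial, so the conditional entropy equals the log-determinant expression pointwise in $s$.
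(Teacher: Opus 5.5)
Your proposal is correct and follows essentially the same route as the paper: apply the log-loss representation of Corollary~\ref{cor:log_ebd} statewise, evaluate each conditional differential entropy as $\tfrac12\log\bigl((2\pi e)^r\det\boldsymbol\Sigma\bigr)$ from the Gaussian predictive laws with covariances \eqref{eq:sigzref} and \eqref{eq:sigzR}, and cancel the common $(2\pi e)^r$ term. Your extra remarks make explicit what the paper's two-line proof leaves implicit: both posterior covariances are independent of the observation realizations, so the averaging is trivial, and \eqref{eq:gaussian_rx_belief} is taken as a modeling premise.
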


\begin{proof}
By Corollary~\ref{cor:log_ebd}, the log-loss EBD equals the difference between the conditional differential entropies of $\bm z$ at the Rx-A and Tx-A. For an $r$-dimensional Gaussian vector, the conditional differential entropy is one half of the logarithm of the determinant of its conditional covariance, up to the common additive term $r\log(2\pi e)/2$, which cancels in the difference.
\end{proof}

Since all logarithms are natural, $\LDEBD(s)$ is measured in nats and quantifies the logarithmic inflation in posterior uncertainty volume at the Rx-A relative to the Tx-A reference. Under Assumption~\ref{ass:mode_separable}, this covariance-level measure admits the following mode-wise scalar decomposition:
\begin{align}
  \LDEBD(s)
  &=
  \frac{1}{2}
  \sum_{i=1}^{r}
  \log
  \frac{
    a_i+b_{i,{\rm ref}}
  }{
    a_i+\gamma_i(s)
  }.
  \label{eq:LDEBDmodes_general}
\end{align}
Each summand quantifies the residual information loss associated with an individual event mode and is nonnegative by \eqref{eq:delivered_precision_bounds}. With the event model, sensing configuration, and delivery design held fixed, we henceforth write
\begin{align}
  \mathcal D(s)
  &=
  \LDEBD(s)
  \label{eq:ldebd_scalar_function}
\end{align}
for the resulting received-SNR response. For a design that asymptotically delivers every mode satisfying $b_{i,{\rm ref}}>0$,
\begin{align}
  \gamma_i(0) = 0,
  \quad
  \lim_{s\to\infty} \gamma_i(s) = b_{i,{\rm ref}}.
  \label{eq:precision_endpoints}
\end{align}
The maximum LD-EBD is therefore attained at zero received SNR and is given by
\begin{align}
  \delta_{\max} = \mathcal D(0)
  = \frac{1}{2} \sum_{i=1}^{r} \log \frac{ a_i+b_{i,{\rm ref}} }{ a_i}.
  \label{eq:deltamaxgeneral}
\end{align}

Suppose that $\mathcal D(s)$ is continuous and strictly decreasing over the region in which $\mathcal D(s)>0$. For any feasible target satisfying
\begin{align}
  0
  &<
  \delta
  <
  \delta_{\max},
  \label{eq:feasrange}
\end{align}
define the EBD-induced received-SNR threshold $\kappa_\delta$ as the unique solution to
\begin{align}
  \mathcal D(\kappa_\delta)
  &=
  \delta.
  \label{eq:kappadef}
\end{align}
The threshold $\kappa_\delta$ specifies the minimum instantaneous received SNR required to satisfy the prescribed LD-EBD target. It is determined by the Gaussian event and sensing parameters together with the delivery design, but is independent of the fading distribution. Under the hard-ceiling model, the received-SNR response becomes
\begin{align}
  \mathcal D_{\rm H}(s)
  &=
  \frac{1}{2}
  \sum_{i=1}^{r}
  \log
  \frac{
    a_i+b_{i,{\rm ref}}
  }{
    a_i+
    \min\!\left\{
      b_{i,{\rm ref}},
      b_i s
    \right\}
  }.
  \label{eq:LDEBDmodes}
\end{align}
The corresponding threshold $\kappa_{\delta,{\rm H}}$ is the unique solution to
\begin{align}
  \prod_{i=1}^{r}
  \frac{
    a_i+b_{i,{\rm ref}}
  }{
    a_i+
    \min\!\left\{
      b_{i,{\rm ref}},
      b_i\kappa_{\delta,{\rm H}}
    \right\}
  }
  &=
  e^{2\delta}.
  \label{eq:kappaproduct}
\end{align}
The left-hand side is continuous and strictly decreasing over the relevant interval, and hence $\kappa_{\delta,{\rm H}}$ can be efficiently computed by a one-dimensional bisection search. The received-SNR threshold thereby provides the interface between the event-belief reliability requirement and channel fading.

\subsection{Rayleigh Outage and Link-Budget Scaling}

For a fixed event model, sensing configuration, and delivery design, the LD-EBD target is violated if and only if $\Gamma=\rho G$ falls below $\kappa_\delta$. Hence,
\begin{align}
  P_{{\rm EB}\text{-}{\rm out}}(\delta;\rho)
  &=
  \Prob\!\left\{
    \mathcal D(\rho G)>\delta
  \right\}
  \nonumber\\
  &=
  \Prob\!\left\{
    \rho G<\kappa_\delta
  \right\},
  \label{eq:eboutdef}
\end{align}
where the dependence on the fixed model, sensing, and delivery parameters is suppressed for notational simplicity. By the definition of the required received-SNR threshold, the reliability requirement in \eqref{eq:eb_reliability} is equivalently expressed as
\begin{align}
  P_{{\rm EB}\text{-}{\rm out}}(\delta;\rho)
  &\le
  \epsilon.
  \label{eq:rayleigh_reliability_requirement}
\end{align}

\begin{theorem}[Inverse-SNR scaling of event-belief outage]
\label{thm:siso_event_belief_outage}
For the considered immersive communication system, the event-belief outage probability decreases as either the transmit SNR or the average channel power gain increases. In particular, at high SNR, the outage probability vanishes inversely with the transmit SNR and therefore achieves first-order diversity. Equivalently, the system achieves a diversity order of one. More precisely, for $H\sim\CN(0,\beta)$, $\rho>0$, and every
$0<\delta<\delta_{\max}$,
\begin{align}
  P_{{\rm EB}\text{-}{\rm out}}(\delta;\rho)
  &=
  1-
  \exp\!\left(
    -\frac{\kappa_\delta}{\beta\rho}
  \right).
  \label{eq:outagelaw}
\end{align}
\end{theorem}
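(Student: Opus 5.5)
The plan is to reduce the event-belief outage to a threshold event on the exponentially distributed channel gain and then read off monotonicity and diversity from the closed form.

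\textbf{Step 1: outage as a threshold event.} I would first make rigorous the equivalence in \eqref{eq:eboutdef}, namely $\{\mathcal D(\rho G)>\delta\}=\{\rho G<\kappa_\delta\}$ for $0<\delta<\delta_{\max}$. By \eqref{eq:precision_endpoints} and the continuity and strict monotonicity assumed for $\mathcal D$ on the region where $\mathcal D>0$, the function $\mathcal D$ starts at $\mathcal D(0)=\delta_{\max}>\delta$. It then decreases strictly until it either reaches $0$ or converges to $0$. Since $\mathcal D(s)\ge 0$ always (each summand in \eqref{eq:LDEBDmodes_general} is nonnegative), a value $s$ with $\mathcal D(s)>\delta$ must lie in the strictly decreasing region. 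There, $\mathcal D(s)>\delta=\mathcal D(\kappa_\delta)$ holds exactly when $s<\kappa_\delta$. Conversely, for $s\ge\kappa_\delta$ we get $\mathcal D(s)\le\delta$, both inside the decreasing region and beyond it, where $\mathcal D=0$. This boundary handling, together with the possible flat zero region, is the only delicate point. For the hard-ceiling law \eqref{eq:LDEBDmodes}, it can be checked directly: the map is piecewise a strictly decreasing logarithm until the last mode saturates at $s=\max_i b_{i,\rm ref}/b_i$, after which it is identically zero. Also $\kappa_\delta>0$, since $\delta<\delta_{\max}$.

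\textbf{Step 2: the exact outage expression.} Since $H\sim\CN(0,\beta)$, the gain $G=|H|^2$ is exponential with mean $\beta$, so $\Prob\{G<g\}=1-e^{-g/\beta}$. Because $\kappa_\delta$ does not depend on the fading law or on $\rho$, we can set $g=\kappa_\delta/\rho$. This gives $\Prob\{\rho G<\kappa_\delta\}=1-\exp(-\kappa_\delta/(\beta\rho))$, which is \eqref{eq:outagelaw}. The event $\{G=\kappa_\delta/\rho\}$ has probability zero, so strict versus non-strict inequalities do not matter.

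\textbf{Step 3: monotonicity and diversity order.} The exponent $\kappa_\delta/(\beta\rho)$ decreases strictly in both $\rho$ and $\beta$, so the outage decreases strictly in each of them. For the high-SNR behavior, I would use $1-e^{-x}=x+O(x^2)$ as $x\to0$. This yields $P_{{\rm EB}\text{-}{\rm out}}=\kappa_\delta/(\beta\rho)+O(\rho^{-2})$, so $-\lim_{\rho\to\infty}\log P_{{\rm EB}\text{-}{\rm out}}/\log\rho=1$, which is diversity order one.

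I expect Step 1 to be the main obstacle; the remaining steps are routine computation.
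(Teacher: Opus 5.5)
Your proposal is correct and follows the paper's proof in Appendix~\ref{app:proof_siso_event_belief_outage}. That proof identifies $\{\mathcal D(\rho G)>\delta\}$ with $\{G<\kappa_\delta/\rho\}$ using the strict monotonicity of $\mathcal D$ on its positive region, and then evaluates the exponential CDF of $G$. Your Step~1 is more careful than the paper, which asserts this event equivalence without treating the flat zero region, and your Step~3 supplies the monotonicity and diversity-order argument that the paper only records in the main text via \eqref{eq:highsnr}--\eqref{eq:divorder}.
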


\begin{proof}
See Appendix~\ref{app:proof_siso_event_belief_outage}.
\end{proof}

The theorem establishes a separation between threshold determination and fading-induced outage: the Gaussian event and sensing parameters, the delivery design, and the prescribed LD-EBD target jointly determine $\kappa_\delta$, whereas the Rayleigh fading law maps this deterministic threshold to the corresponding outage probability. Accordingly, inverting \eqref{eq:outagelaw} yields the minimum transmit SNR required to satisfy a prescribed outage constraint $\epsilon\in(0,1)$:
\begin{align}
  \rho_{\rm req}(\delta,\epsilon)
  &=
  \frac{
    \kappa_\delta
  }{
    \beta
    \left[
      -\log(1-\epsilon)
    \right]
  }.
  \label{eq:requiredsnr}
\end{align}
Thus, any reduction in $\kappa_\delta$ translates directly into a proportional reduction in the required link budget. At high transmit SNR,
\begin{align}
  &P_{{\rm EB}\text{-}{\rm out}}(\delta;\rho)
  =
  \frac{
    \kappa_\delta
  }{
    \beta
  }
  \rho^{-1}
  +
  o\!\left(
    \rho^{-1}
  \right),
  \label{eq:highsnr}\\
  &d_{\rm EB}(\delta)
  =
  -
  \lim_{\rho\to\infty}
  \frac{
    \log
    P_{{\rm EB}\text{-}{\rm out}}(\delta;\rho)
  }{
    \log\rho
  }
  =
  1.
  \label{eq:divorder}
\end{align}
Hence, the event, sensing, and delivery parameters determine the high-SNR outage prefactor through $\kappa_\delta$, whereas the Rayleigh channel fixes the diversity order at one. 

\subsection{Resource Coupling for Event-Native Delivery}
The hard-ceiling model characterizes the received-SNR threshold in terms of fixed delivery slopes $\{b_i\}$. We now relate the delivered precisions to radio resources through an orthogonal Gaussian transmission scheme aligned with the common event-mode basis. Define the set of active event modes as
\begin{align}
  \mathcal M_{\rm act}
  &=
  \left\{
    i:
    b_{i,{\rm ref}}>0
  \right\}.
  \label{eq:active_mode_set}
\end{align}
For each $i\in\mathcal M_{\rm act}$, let $z_i$ denote the corresponding coordinate of $\bm z$. Since its prior and Tx-A posterior precisions are $a_i$ and $a_i+b_{i,{\rm ref}}$, respectively, Gaussian precision matching yields an equivalent scalar statistic $T_i=T_i(B_T)$ satisfying
\begin{align}
  T_i
  &\overset{\rm d}{=}
  z_i+N_{T,i},
  &
  N_{T,i}
  &\sim
  \N\!\left(
    0,
    b_{i,{\rm ref}}^{-1}
  \right),
  \label{eq:equivalent_source_observation}
\end{align}
where $N_{T,i}$ is independent of $z_i$. This posterior-equivalent representation does not require the Tx-A to observe $z_i$ directly. The variance of the equivalent observation is
\begin{align}
  \nu_i
  =
  \E[T_i^2]
  =
  \frac{1}{a_i}
  +
  \frac{1}{b_{i,{\rm ref}}}.
  \label{eq:equivalent_observation_variance}
\end{align}

Let $n_i$ channel uses and per-use power $P_i$ be allocated to event mode $i$, subject to
\begin{align}
  \sum_{i\in\mathcal M_{\rm act}}
  n_i
  \le
  n,
  \quad
  \sum_{i\in\mathcal M_{\rm act}}
  n_iP_i
  \le
  nP.
  \label{eq:mode_resource_constraints}
\end{align}
Define the corresponding energy-allocation fractions as
\begin{align}
  \lambda_i
  =
  \frac{
    n_iP_i
  }{
    nP
  },
  \quad
  \lambda_i
  \ge
  0,
  \quad
  \sum_{i\in\mathcal M_{\rm act}}
  \lambda_i
  \le
  1.
  \label{eq:mode_energy_fraction}
\end{align}

After variance normalization, transmission over the assigned channel uses, and coherent combining, the Rx-A obtains the equivalent observation
\begin{align}
  &\widetilde T_i
  =
  z_i
  +
  N_{T,i}
  +
  N_{C,i},
  \label{eq:equivalent_channel_observation}\\
  &N_{C,i}
  \sim
  \N\!\left(
    0,
    \bigl(c_i s\bigr)^{-1}
  \right),
  \quad
  c_i
  =
  \frac{
    n_iP_i
  }{
    P\nu_i
  }
  =
  \frac{
    n\lambda_i
  }{
    \nu_i
  },
  \label{eq:resource_to_precision}
\end{align}
where $N_{C,i}$ is independent of $(z_i,N_{T,i})$. The coefficient $c_i$ is the communication-induced precision slope of event mode $i$. Since the sensing and communication noises are independent, the delivered precision increment is
\begin{align}
  \gamma_i^{\rm G}
  \!\left(
    s;\bm\lambda
  \right)
  &=
  \frac{
    b_{i,{\rm ref}}c_i s
  }{
    b_{i,{\rm ref}}+c_i s
  },
  \qquad
  i\in\mathcal M_{\rm act}.
  \label{eq:smooth_delivered_precision}
\end{align}

When $b_i=c_i$, this precision law has the same low-SNR slope and sensing-limited ceiling as the hard-ceiling model. Moreover,
\begin{align}
  \frac{1}{2}
  \min\!\left\{
    b_{i,{\rm ref}},
    c_i s
  \right\}
  &\le
  \gamma_i^{\rm G}
  \!\left(
    s;\bm\lambda
  \right)
  \le
  \min\!\left\{
    b_{i,{\rm ref}},
    c_i s
  \right\}.
  \label{eq:harmonic_min_bound}
\end{align}
Under the matched choice $b_i=c_i$, \eqref{eq:harmonic_min_bound} further implies
$\mathcal D_{\rm H}(s)\le\mathcal D_{\rm G}(s;\bm\lambda)$ and
$\kappa_{\delta,{\rm H}}\le\kappa_{\delta,{\rm G}}(\bm\lambda)$ for any common feasible target. The hard-ceiling model therefore provides an optimistic but tractable benchmark, while the Gaussian scheme exposes the dependence on radio-resource allocation.

Substituting \eqref{eq:smooth_delivered_precision} into
\eqref{eq:LDEBDmodes_general} yields the scheme-specific LD-EBD response
\begin{align}
  \mathcal D_{\rm G}
  \!\left(
    s;\bm\lambda
  \right)
  &=
  \frac{1}{2}
  \sum_{i\in\mathcal M_{\rm act}}
  \log
  \frac{
    a_i+b_{i,{\rm ref}}
  }{
    a_i+
    \gamma_i^{\rm G}
    \!\left(
      s;\bm\lambda
    \right)
  }.
  \label{eq:gaussian_scheme_ldebd}
\end{align}
For any fixed allocation $\bm\lambda$ and feasible target satisfying
\begin{align}
  \mathcal D_{\rm G}(\infty;\bm\lambda)
  &<
  \delta
  <
  \mathcal D_{\rm G}(0;\bm\lambda),
  \label{eq:gaussian_scheme_feasible_target}
\end{align}
define the corresponding received-SNR threshold $\kappa_{\delta,{\rm G}}(\bm\lambda)$ through
\begin{align}
  \mathcal D_{\rm G}
  \!\left(
    \kappa_{\delta,{\rm G}}(\bm\lambda);
    \bm\lambda
  \right)
  &=
  \delta.
  \label{eq:gaussian_scheme_threshold}
\end{align}
Combined with Theorem~\ref{thm:siso_event_belief_outage}, this threshold characterization provides the basis for target-dependent event-native radio-resource allocation.

%% file: sections/simulation.tex
\section{Numerical Results}
\label{sec:numerical_results}

This section presents a numerical evaluation of the proposed event-native immersive-communication framework in a physics-based fluid-interaction (FIT) environment. In this section, we first assess the validity of stream distortion as a surrogate for the receiver's knowledge of the physical event. We next compare the amount of sensing-supported event information retained by the event-native and stream-oriented source interfaces under the same transmission constraints. We finally verify the analytical event-belief outage characterization and demonstrate that the event-native belief advantage yields predictable reliability gains and a reduced average received-SNR requirement.

\subsection{Physics-Based Environment and Belief Calibration}

We consider a FIT event in which a moving contact jointly excites a membrane displacement field $u(x,y,t)$ and a pressure field $p(x,y,t)$~\cite{pastewka2014roughContact,sahli2018realContactArea}. As an instance of \eqref{eq:physical_source_pde}, the coupled dynamics are
\begin{align}
  &\rho_s u_{tt}
  +
  c_s u_t
  +
  k_s u
  -
  T_s\nabla^2u
  =
  F_{\rm contact}(x,y,t)
  -
  \beta_p p,
  \label{eq:fit_membrane}\\
  &\tau_f p_t
  =
  D_f\nabla^2p
  -
  p
  +
  \kappa_f u_t,
  \label{eq:fit_pressure}
\end{align}
where $F_{\rm contact}(x,y,t)$ is a moving load centered at $(x_c(t),y_c(t))$. A single contact episode therefore produces deformation, pressure diffusion, vibration, and audio-related observations. These modalities share a common physical cause, but their event-relevant components need not be aligned by waveform shape, dominant frequency, or stream-domain energy. To obtain a finite-dimensional event representation, we project $u(x,y,t)$ and $p(x,y,t)$ onto their dominant spatial modes, with coefficients $\bm\eta_k$ and $\bm\xi_k$, respectively. The resulting FIT event coordinate is
\begin{align}
  \bm q_k^{\rm FIT}
  &=
  \left[
    \bm\eta_k^T,
    \dot{\bm\eta}_k^T,
    \bm\xi_k^T,
    x_{c,k},
    y_{c,k},
    v_{c,k},
    F_{n,k},
    F_{t,k}
  \right]^T.
  \label{eq:fit_event_coordinate}
\end{align}
Following \eqref{eq:sensor_operator}, modality-specific operators extract contact, kinematic, pressure, vibration, and audio features from each FIT episode. Fig.~\ref{fig:fit_source} illustrates this event-to-observation mapping, showing that the modalities are coupled through the latent event rather than through identical stream coordinates.

\begin{figure*}[t]
  \centering
  \includegraphics[width=0.49\textwidth]{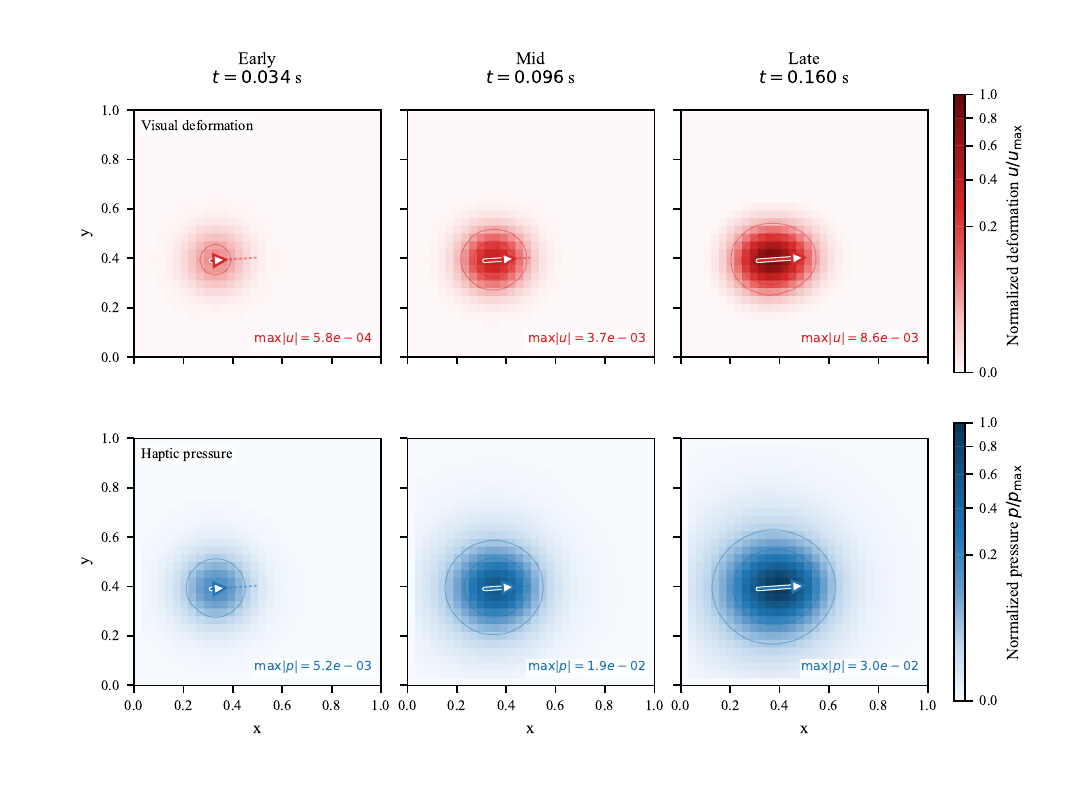}
  \hfill
  \includegraphics[width=0.49\textwidth]{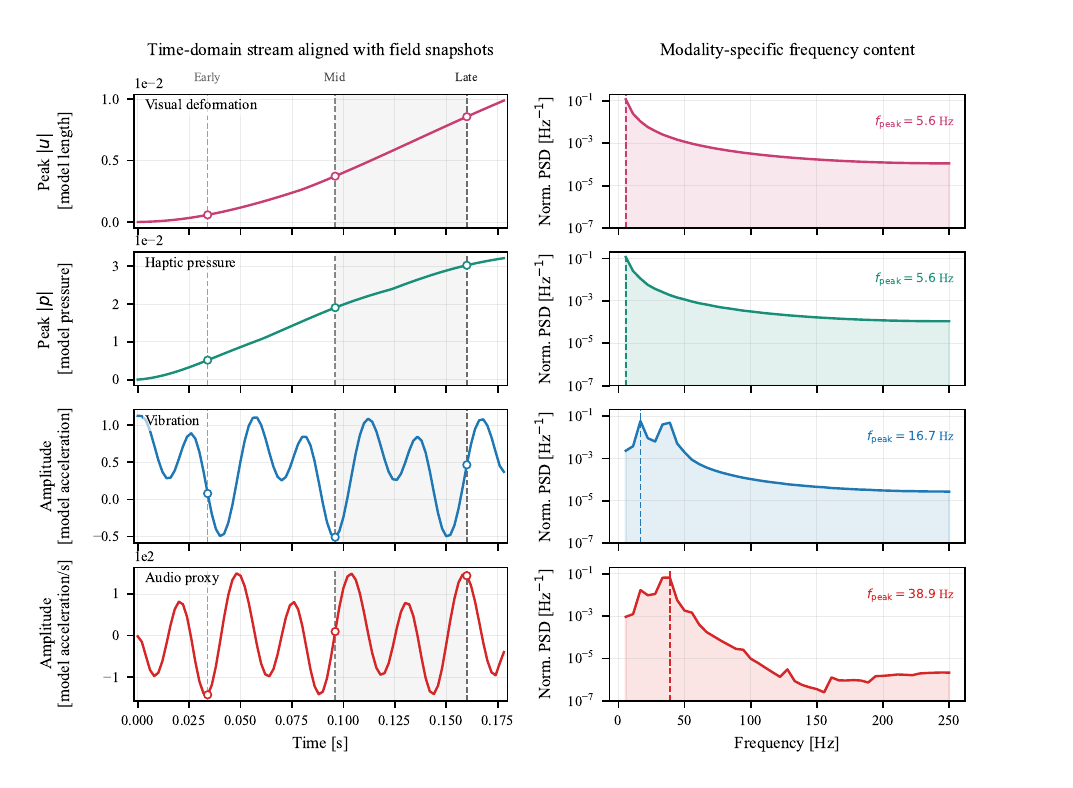}
  \caption{FIT event source and multimodal observations. A common moving-contact episode induces membrane deformation, pressure diffusion, vibration, and audio-related features whose informative components are aligned through the latent physical event rather than stream-domain energy.}
  \label{fig:fit_source}
\end{figure*}

The nonlinear simulator generates $10^6$ event--observation pairs $(\bm q_k^{\rm FIT},\bm x_k)$. After centering, we calibrate the Gaussian observation model in \eqref{eq:lgobs} from these pairs, obtaining the event covariance, observation operator, and residual covariance that determine the Tx-A belief $B_T$. We then express the sensing-supported uncertainty in a three-mode event basis associated with contact, pressure, and vibration information, and use the resulting mode parameters to instantiate the Rx-A covariance model in Section~\ref{sec:analysis}. 

The calibrated beliefs are delivered over the coherent Rayleigh channel in Section~\ref{sec:analysis}. The propagation and noise parameters in Table~\ref{tab:wireless_parameters} map the transmit-power sweep to the average received SNR
$\bar\Gamma=\E[\Gamma]=\beta\rho$.
We compare four source interfaces under identical transmitted dimension, blocklength, energy budget, and fading realizations. Event-native preserves the event modes that define the Tx-A belief; MSE-oriented minimizes multimodal-observation reconstruction error; Stream principal component analysis (PCA) retains the dominant observation-variance directions; and Uncoded analog directly transmits selected observation coordinates.

Performance is reported using normalized LD-EBD in \eqref{eq:normalizedEBD}. For the fixed event model, sensing model, and task considered here, $D_0-D_{\rm ref}$ is common to all interfaces. Normalization therefore rescales the LD-EBD threshold without changing the interface ordering or the threshold--outage relation derived in Section~\ref{sec:analysis}. An auxiliary normalized physical-state estimation error is included only to aid interpretation.

\begin{table}[t]
  \centering
  \small
  \caption{Wireless-link and evaluation parameters.}
  \label{tab:wireless_parameters}
  \setlength{\tabcolsep}{4pt}
  \renewcommand{\arraystretch}{1.08}
  \begin{tabular}{@{}lc@{}}
    \toprule
    Parameter & Value\\
    \midrule
    Carrier frequency $f_c$ & $3.5$~GHz\\
    Bandwidth & $20$~MHz\\
    Link distance & $500$~m\\
    Tx/Rx heights & $10/1.5$~m\\
    Tx/Rx gains & $5/0$~dBi\\
    Noise figure & $7$~dB\\
    $P_{\rm Tx}$ sweep & $-5{:}5{:}40$~dBm\\
    Channel realizations / seed & $10^6 / 7$\\
    \bottomrule
  \end{tabular}
\end{table}

\subsection{Metric Validity and Event-Native Delivery}

We first test whether stream distortion can serve as a surrogate for receiver knowledge of the physical event. Fig.~\ref{fig:metric_validity} compares physically distinct FIT procedures whose stream NMSE values strongly overlap. The median NMSE changes by only $1.38\%$, whereas the median LD-EBD changes by a factor of $17.32$. Their Spearman correlations with the reference event-risk ordering are $-0.035$ and $0.995$, respectively. Stream NMSE therefore fails to distinguish procedures that induce substantially different posterior uncertainty, while LD-EBD preserves their event-risk ordering.

\begin{figure}[t]
  \centering
  \includegraphics[width=\columnwidth]{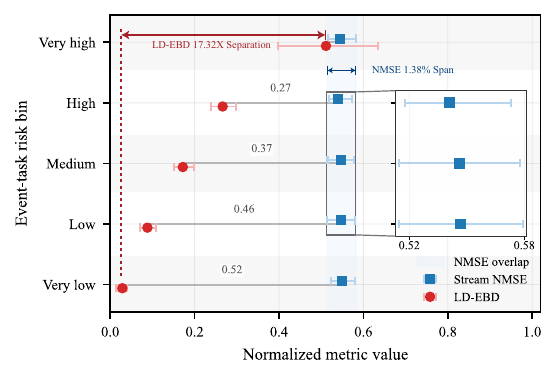}
  \caption{Metric validity for receiver event knowledge. Physically distinct FIT procedures have overlapping stream NMSE values, whereas LD-EBD separates their event-risk levels.}
  \label{fig:metric_validity}
\end{figure}

Having established that NMSE is not a reliable surrogate, we use normalized LD-EBD to compare the four source interfaces under common channel realizations. Fig.~\ref{fig:ld_ebd_snr} shows that Event-native achieves the smallest normalized LD-EBD throughout the simulated average-received-SNR range. Relative to MSE-oriented, its pointwise reduction increases from $3.1\%$ at low SNR to $63.6\%$ at high SNR, with an average reduction of $37.6\%$.

\begin{figure}[t]
  \centering
  \includegraphics[width=\columnwidth]{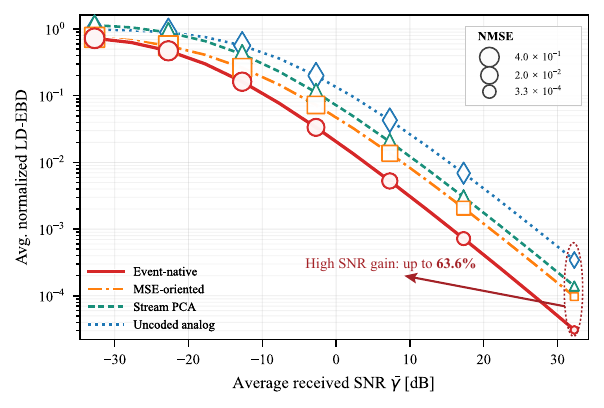}
  \caption{Normalized LD-EBD versus average received SNR $\bar\Gamma$ for the four source interfaces. Bubble sizes indicate the auxiliary normalized physical-state estimation error.}
  \label{fig:ld_ebd_snr}
\end{figure}

The SNR dependence clarifies the source of this gain. At low $\bar\Gamma$, channel noise limits all interfaces and reduces the effect of source-interface choice. As $\bar\Gamma$ increases, communication noise recedes and the dominant difference becomes how each interface allocates its representation to sensing-supported event information. Event-native then retains a larger fraction of the Tx-A decision capability, while stream-oriented interfaces continue to spend resources on directions selected by observation energy or reconstruction error.

The bubble sizes in Fig.~\ref{fig:ld_ebd_snr} provide demonstration of NMSE under different SNR. The LD-EBD reduction is accompanied by competitive physical-state estimation accuracy, indicating that the event-level gain is not obtained by discarding receiver-relevant physical information. The next question is whether the analytical received-SNR threshold accurately predicts the reliability of this gain under random fading.

\subsection{Reliability and Resource-Design Gains}

Fig.~\ref{fig:outage} compares the analytical event-belief outage probability in \eqref{eq:outagelaw} with Monte Carlo estimates for several normalized LD-EBD thresholds. The maximum absolute deviation is $6.93\times10^{-4}$, and the maximum relative deviation is $2.27\%$ for outage probabilities above $10^{-4}$. The agreement validates the threshold--fading separation in Theorem~\ref{thm:siso_event_belief_outage}: the calibrated event, sensing, and interface parameters determine the received-SNR threshold, while the Rayleigh channel maps that threshold into outage probability.

\begin{figure}[t]
  \centering
  \includegraphics[width=\columnwidth]{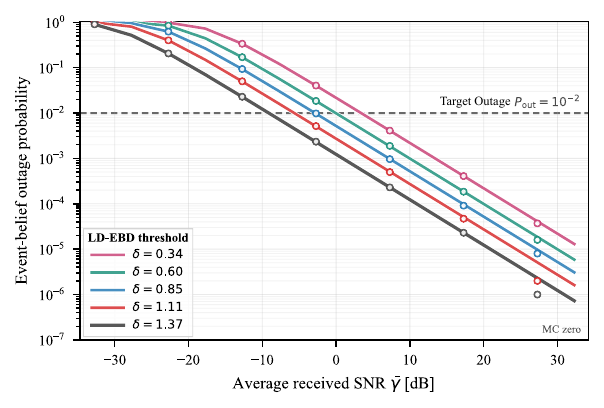}
  \caption{Validation of the event-belief outage law under Rayleigh fading. Solid curves show \eqref{eq:outagelaw}, and markers show Monte Carlo estimates for different normalized LD-EBD thresholds.}
  \label{fig:outage}
\end{figure}

The validated outage law allows the belief advantage to be translated into a communication-resource requirement. For normalized LD-EBD target $0.5$ and allowed outage probability $10^{-2}$, Fig.~\ref{fig:required_snr} shows that Event-native requires $-3.00$~dB average received SNR over 16 bootstrap recalibrations. MSE-oriented requires $0.48$~dB under the same condition, giving Event-native a $3.47$~dB saving. Stream PCA and Uncoded analog require still larger average received SNR. Because all interfaces use the same fading realizations and link budget, these differences are source-interface gains rather than channel-coding gains. The bootstrap spread further shows that the required-SNR advantage is stable under recalibration of the FIT belief model.

\begin{figure}[t]
  \centering
  \includegraphics[width=\columnwidth]{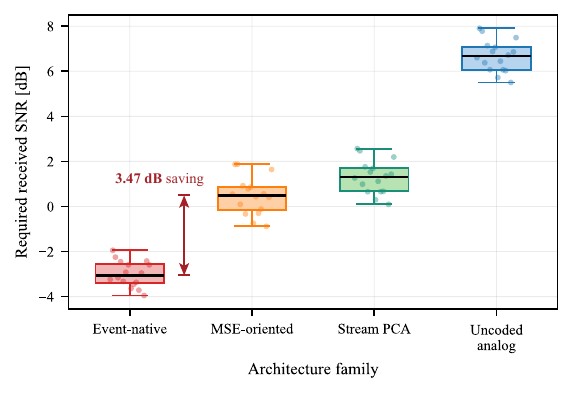}
  \caption{Required average received SNR under a normalized LD-EBD outage constraint. Error bars summarize 16 bootstrap recalibrations of the FIT belief model.}
  \label{fig:required_snr}
\end{figure}

We finally examine how the internal event-mode allocation of the Event-native interface should vary with the reliability target. The considered allocation family is indexed by a concentration parameter $\alpha$, with larger $\alpha$ producing a more concentrated allocation across event modes. Fig.~\ref{fig:event_native_tradeoff} reports the required-SNR penalty over this family and the best concentration parameter $\alpha^\star$ for each LD-EBD target. Here, $\alpha^\star$ denotes the best value within the considered parameterized family rather than a globally optimal allocation.

\begin{figure*}[t]
  \centering
  \includegraphics[width=0.49\textwidth]{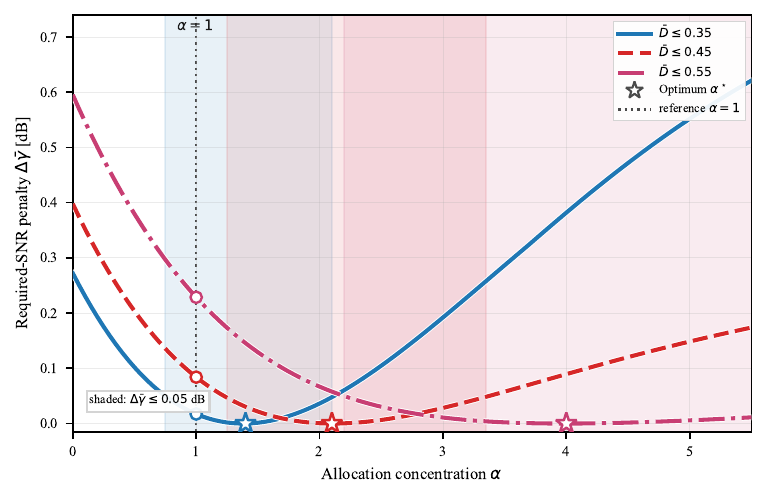}
  \hfill
  \includegraphics[width=0.49\textwidth]{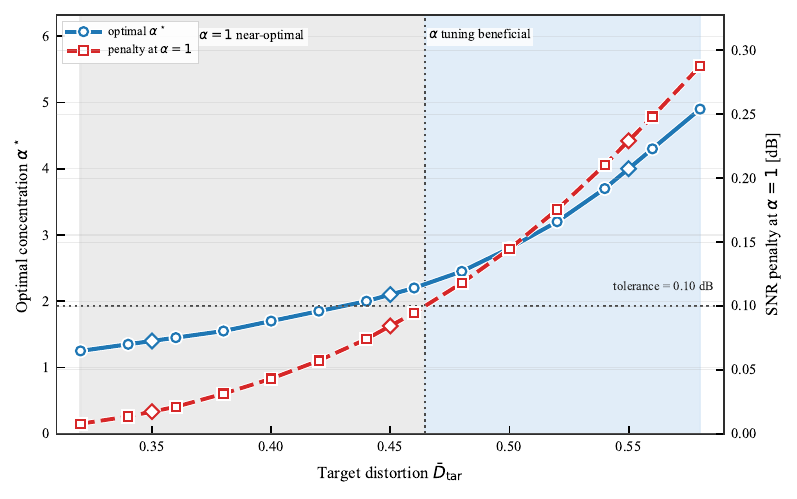}
  \caption{Event-native allocation tradeoff. Left: required-SNR penalty versus the concentration parameter $\alpha$. Right: the best $\alpha^\star$ within the considered family and the penalty of the reference design $\alpha=1$ versus the normalized LD-EBD target.}
  \label{fig:event_native_tradeoff}
\end{figure*}

As the admissible LD-EBD target is relaxed, the minimizing concentration parameter increases, indicating that a more concentrated event-mode allocation becomes preferable. The reference design $\alpha=1$ remains near-optimal in the stricter-target regime, while tuning becomes more beneficial for relaxed targets. The small reference-design penalty shows that the event-native interface is robust to moderate mistuning; the shift in $\alpha^\star$ shows that the preferred resource concentration still depends on the declared event-belief reliability target. EBD therefore serves not only as an evaluation metric but also as a target-dependent design criterion for 6G immersive communication.

%% file: sections/conclusion.tex
\section{Conclusion}
\label{sec:conclusion}

This paper develops an event-native foundation for immersive communication by making a service-relevant physical event, rather than its sensory streams, the communication object. The Tx-A and Rx-A beliefs provide the corresponding operational information states, while event-belief deficiency measures the communication-induced loss of event-dependent decision capability relative to the sensing-supported Tx-A reference. For Gaussian event beliefs with mode-separable delivery over Rayleigh fading, this fidelity criterion induces a received-SNR threshold that yields an exact outage expression and a required-SNR characterization, thereby connecting physical-event preservation to wireless reliability. Experiments in a physics-based environment showed that stream NMSE can remain nearly unchanged across physically distinct event procedures, whereas LD-EBD accurately reflects their event-level uncertainty. The analytical outage law closely matched Monte Carlo results, and, relative to an MSE-oriented interface, event-native delivery reduced normalized LD-EBD by up to $63.6\%$ and lowered the required average received SNR by $3.47$~dB under the evaluated outage constraint. These findings support a formulation of immersive communication based on preserving physical-event beliefs rather than reconstructing individual modality streams, and motivate future work on nonlinear and learned event representations, interactive multi-agent systems, and experimental prototypes.

%% file: sections/appendix.tex
\appendices

\section{Proof of Lemma~\ref{prop:belief_sufficiency}}
\label{app:proof_belief_sufficiency}

Let $\cI_T=X^{1:M}$. For every measurable set $A\subseteq\cZ$, the conditional independence in \eqref{eq:zx_condind} gives
\begin{align}
  P_{Z|\cI_T}(A|\cI_T)
  =
  \int_{\cQ}
  P_{\rm imm}(A|q)
  B_T(dq)
  \triangleq
  \kappa_A(B_T),
  \label{eq:app_predictive_kernel}
\end{align}
where $\kappa_A(B_T)$ is a measurable function of the Tx-A belief. Since $B_T$ is itself determined by $\cI_T$, the tower property yields
\begin{align}
  P_{Z|B_T}(A|B_T)
  &=
  \E\!\left[
    P_{Z|\cI_T}(A|\cI_T)
    \,\middle|\,
    B_T
  \right]
  \nonumber\\
  &=
  \E\!\left[
    \kappa_A(B_T)
    \,\middle|\,
    B_T
  \right]
  \nonumber\\
  &=
  \kappa_A(B_T)
  \nonumber\\
  &=
  P_{Z|\cI_T}(A|\cI_T).
  \label{eq:app_belief_sufficiency}
\end{align}
Therefore,
\begin{align}
  Z
  &\mathrel{\perp\!\!\!\perp}
  X^{1:M}
  \mid
  B_T.
  \label{eq:app_conditional_independence}
\end{align}
The conditional law relevant to every Bayesian decision problem based on $Z$ is thus determined by $B_T$, so an optimal rule based on $X^{1:M}$ can be expressed as a function of the Tx-A belief.

\section{Proof of Proposition~\ref{prop:stream_event_nonequivalence}}
\label{app:proof_stream_event_nonequivalence}

Let $Q_1,Q_2\sim\N(0,1)$ be independent, define
\begin{align}
  X
  =
  \left[
    Q_1,
    Q_2
  \right]^T,
  \quad
  Z
  =
  Q_1,
  \label{eq:app_counterexample_variables}
\end{align}
and consider the Rx-A information states
\begin{align}
  \cI_1
  =
  Q_1,
  \quad
  \cI_2
  =
  Q_2.
  \label{eq:app_counterexample_states}
\end{align}
Their minimum mean-squared stream reconstructions are the corresponding conditional means:
\begin{align}
  \E[X|\cI_1]
  =
  \left[
    Q_1,
    0
  \right]^T,
  \quad
  \E[X|\cI_2]
  =
  \left[
    0,
    Q_2
  \right]^T.
  \label{eq:app_stream_estimators}
\end{align}
Hence, the two information states have the same minimum stream-reconstruction MSE,
\begin{align}
  \E\!\left[
    \left\|
      X-\E[X|\cI_1]
    \right\|^2
  \right]
  &=
  1,
  \nonumber\\
  \E\!\left[
    \left\|
      X-\E[X|\cI_2]
    \right\|^2
  \right]
  &=
  1.
  \label{eq:app_equal_stream_mse}
\end{align}
For squared-error loss on $Z$, however,
\begin{align}
  \E\!\left[
    \left(
      Z-\E[Z|\cI_1]
    \right)^2
  \right]
  &=
  0,
  \nonumber\\
  \E\!\left[
    \left(
      Z-\E[Z|\cI_2]
    \right)^2
  \right]
  &=
  1.
  \label{eq:app_different_event_risks}
\end{align}
Thus, equal minimum stream-reconstruction MSE does not imply equal event-dependent decision risk.

\section{Proof of Theorem~\ref{thm:siso_event_belief_outage}}
\label{app:proof_siso_event_belief_outage}

For $0<\delta<\delta_{\max}$, the received-SNR response $\mathcal D(s)$ is strictly decreasing while positive, and $\kappa_\delta$ is the unique solution to \eqref{eq:kappadef}. Therefore,
\begin{align}
  \left\{
    \mathcal D(\rho G)>\delta
  \right\} =
  \left\{
    \rho G<\kappa_\delta
  \right\}=
  \left\{
    G<
    \frac{\kappa_\delta}{\rho}
  \right\}.
  \label{eq:app_outage_event}
\end{align}
Since $H\sim\CN(0,\beta)$, the channel power gain
$G=|H|^2$ is exponentially distributed with mean $\beta$ and cumulative distribution function
\begin{align}
  F_G(g)
  &=
  1-
  \exp\!\left(
    -\frac{g}{\beta}
  \right),
  \qquad
  g\ge0.
  \label{eq:app_rayleigh_cdf}
\end{align}
It follows that
\begin{align}
  P_{{\rm EB}\text{-}{\rm out}}(\delta;\rho)
  &=
  \Prob\!\left\{
    G<
    \frac{\kappa_\delta}{\rho}
  \right\}
  =
  F_G\!\left(
    \frac{\kappa_\delta}{\rho}
  \right)
  \nonumber\\
  &=
  1-
  \exp\!\left(
    -\frac{\kappa_\delta}{\beta\rho}
  \right),
  \label{eq:app_outage_law}
\end{align}
which proves \eqref{eq:outagelaw}.